\newtheorem{Open}{Open Problem}
\newcommand{\tr}{{\mathrm{Tr}}}
\newcommand{\gf}{{\mathrm{GF}}}
\newcommand{\Z}{\mathbb{{Z}}}
\newcommand{\ls}{{\mathbb{L}}}
\newcommand{\bC}{{\mathbb{C}}}
\newcommand{\m}{\mathbb{M}}
\newcommand{\cS}{{\mathcal{S}}}
\newcommand{\C}{{\mathcal{C}}}
\begin{document}

\title{Cyclic Codes from Dickson Polynomials\thanks{C. Ding's research was supported by the Hong Kong Research Grants Council, Proj. No. 16300415.}
}


\author{Cunsheng Ding
}


\institute{Cunsheng Ding \at
              Department of Computer Science and Engineering, The Hong Kong University of Science and Technology, Clear Water Bay, Kowloon, Hong Kong, China \\
              Tel.: +852-2358 7021, \ 
              Fax: +852-2358 1477, \ 
              \email{cding@ust.hk}           
}

\date{Received: date / Accepted: date}

\maketitle

\begin{abstract}
Due to their efficient encoding and decoding algorithms, cyclic codes, a subclass 
of linear codes, have applications in communication systems, consumer electronics, 
and data storage systems. In this paper, Dickson polynomials of the first 
kind over finite fields are employed to construct a number of classes of 
cyclic codes. Lower bounds on the minimum weight of some classes of the cyclic 
codes are developed. The minimum weights of some other classes of the codes 
constructed in this paper are determined. The dimensions of the codes obtained 
in this paper are flexible. Many of the codes presented  in this paper are optimal 
or almost optimal. 

\keywords{Dickson polynomial \and cyclic code\and linear code \and linear span \and sequence}
\end{abstract}

\section{Introduction}

Let $q$ be a power of a prime $p$. 
An $[n,k, d]$ linear code over $\gf(q)$ is a $k$-dimensional subspace of $\gf(q)^n$ 
with minimum (Hamming) nonzero weight $d$. 
Let $A_i$ denote the number of codewords with Hamming weight $i$ in a linear code
$\C$ of length $n$. The {\em weight enumerator} of $\C$ is defined by
$$
1+A_1z+A_2z^2+ \cdots + A_nz^n.
$$
The {\em weight distribution} of $\C$ is the sequence $(1,A_1,\ldots,A_n)$.

An $[n, k, d]$ linear code over $\gf(q)$ is called \textit{optimal} if there is no 
$[n, k, d+1]$ or $[n, k+1, d]$ linear code over $\gf(q)$. The optimality of a cyclic 
code may be proved by a bound on linear codes or by an exhaustive computer search on 
all linear codes over $\gf(q)$ with fixed length $n$ and fixed dimension $k$ or fixed 
length $n$ and fixed minimum distance $d$. An 
$[n, k, d]$ linear code is said to be \textit{almost optimal} if a linear code with 
parameters $[n, k+1, d]$ or $[n, k, d+1]$ is optimal.   

A vector $(c_0, c_1, \cdots, c_{n-1}) \in \gf(q)^n$ is said to be {\em even-like} 
if $\sum_{i=0}^{n-1} c_i =0$, and is {\em odd-like} otherwise. The {\em even-like subcode} of a 
linear code consists of all the even-like codewords of this linear code.

An $[n,k]$ linear code $\C$ over $\gf(q)$ is called {\em cyclic} if 
$(c_0,c_1, \cdots, c_{n-1}) \in \C$ implies $(c_{n-1}, c_0, c_1, \cdots, c_{n-2}) 
\in \C$.  
Let $\gcd(n, q)=1$. By identifying any vector $(c_0,c_1, \cdots, c_{n-1}) \in \gf(q)^n$ 
with  
$$ 
c_0+c_1x+c_2x^2+ \cdots + c_{n-1}x^{n-1} \in \gf(q)[x]/(x^n-1), 
$$
any code $\C$ of length $n$ over $\gf(q)$ corresponds to a subset of $\gf(q)[x]/(x^n-1)$. 
The linear code $\C$ is cyclic if and only if the corresponding subset in $\gf(q)[x]/(x^n-1)$ 
is an ideal of the ring $\gf(q)[x]/(x^n-1)$. 
It is well known that every ideal of $\gf(q)[x]/(x^n-1)$ is principal. Let $\C=(g(x))$ be a 
cyclic code, where $g$ is monic and has the least degree. Then $g(x)$ is called the {\em generator polynomial} and 
$h(x)=(x^n-1)/g(x)$ is referred to as the {\em check} polynomial of 
$\C$.  

The error correcting capability of cyclic codes may not be as good as some other linear 
codes in general. However, cyclic codes have wide applications in storage and communication 
systems because they have efficient encoding and decoding algorithms 
\cite{Chie,Forn,Pran}.

Cyclic codes have been studied for decades and a lot of  progress has been made 
(see, for example, \cite{Char,HPbook} for information).  The total number of cyclic codes 
over $\gf(q)$ and their constructions are closely related to cyclotomic cosets 
modulo $n$, and thus many areas of number theory. One way of   
constructing cyclic codes over $\gf(q)$ with length $n$ is  to use the generator polynomial 
\begin{eqnarray}\label{eqn-defseqcode}
\frac{x^n-1}{\gcd(S(x), x^n-1)}
\end{eqnarray}
where 
$$ 
S(x)=\sum_{i=0}^{n-1} s_i x^i  \in \gf(q)[x]   
$$
and $s^{\infty}=(s_i)_{i=0}^{\infty}$ is a sequence of period $n$ over $\gf(q)$. 
Throughout this paper, we call the cyclic code $\C_s$ with the generator polynomial 
of (\ref{eqn-defseqcode}) the {\em code defined by the sequence} $s^{\infty}$, 
and the sequence $s^{\infty}$ the {\em defining sequence} of the cyclic code $\C_s$. 
This approach was successfully employed to construct cyclic codes with interesting 
parameters in \cite{Ding120,Ding121,Ding13,DingBK15,DZ2014,TQX,Weld}.

In this paper, Dickson polynomials of the first kind and small degrees over finite 
fields will be employed to construct 
a number of classes of sequences and then cyclic codes. Lower bounds on the minimum weight of some 
classes of the cyclic codes are developed. The minimum weights of some other 
classes of the codes constructed in this paper are determined. The dimensions 
of the codes of this paper are flexible.  It is amazing that most of the cyclic codes 
from Dickson polynomials of the first kind with small degrees are optimal or almost 
optimal. The major motivation of this paper is 
the optimality of many of these cyclic codes from Dickson polynomials. 
Another motivation of this study is the simplicity of the constructions of the cyclic 
codes in this paper. 

\section{Preliminaries} 

In this section, we present basic notations and results of Dickson polynomials, 
$q$-cyclotomic cosets, and sequences that will be employed in subsequent sections.     

\subsection{Some notation and symbols fixed throughout this paper}\label{sec-notations} 

Throughout this paper, we adopt the following notation unless otherwise stated: 
\begin{itemize} 
\item $p$ is a prime, $q$ is a positive power of $p$,  $m$ is a positive integer, $r=q^m$, and $n=q^m-1$.  
\item $\Z_n=\{0,1,\cdots, n-1\}$, the ring of integers modulo $n$.
\item $\alpha$ is a generator of $\gf(r)^*$, the multiplicative group of $\gf(q)$.  
\item $m_a(x)$ is the minimal polynomial of $a \in \gf(r)$ over $\gf(q)$.
\item $\tr(x)$ is the trace function from $\gf(r)$ to $\gf(q)$.   
\item $\delta(x)$ is a function on $\gf(r)$ defined by $\delta(x) =0$ if $\tr(x)=0$ and $\delta(x) =1$ otherwise.   
\item For any polynomial $g(x) \in \gf(q)[x]$ with $g(0) \ne 0$, $\bar{g}(x)$ denotes the reciprocal of 
         $g(x)$. 
\item For any code $\C$ over $\gf(q)$ with generator polynomial $g(x)$, $\bar{\C}$ denotes the cyclic 
         code with generator polynomial $\bar{g}(x)$. It is well known that $\C$ and $\bar{\C}$ have the 
         same weight distribution.       
\end{itemize}

\subsection{The $q$-cyclotomic cosets modulo $n=q^m-1$}\label{sec-cpsets}

The $q$-cyclotomic coset containing $j$ modulo $n$ is defined by 
$$ 
C_j=\{j, qj, q^2j, \cdots, q^{\ell_j-1}j\} \subset \Z_n
$$
where $\ell_j$ is the smallest positive integer such that $q^{\ell_j}j \equiv j \pmod{n}$, 
and is called the size of $C_j$. It is known that $\ell_j$ divides $m$. The smallest integer 
in $C_j$ is called the {\em coset leader} of $C_j$. Let $\Gamma$ denote the set of all 
coset leaders. By definition, we have 
$$ 
\bigcup_{j \in \Gamma} C_j =\Z_n.  
$$ 
It is easily seen that $\ell_i=\ell_{n-i}$ for all $i$. 

It is well known that $\prod_{j \in C_i} (x-\alpha^j)$ is an irreducible polynomial of degree 
$\ell_i$ over $\gf(q)$ and is the minimal polynomial of $\alpha^i$ over $\gf(q)$. Furthermore,  
the canonical factorization of $x^n-1$ over $\gf(q)$ is given by   
$$ 
x^n-1=\prod_{i \in \Gamma} \prod_{j \in C_i} (x-\alpha^j).  
$$

\subsection{The linear span and minimal polynomial of sequences}

Let $s^L=s_0s_1\cdots s_{L-1}$ be a sequence over $\gf(q)$. The {\em linear 
span} (also called {\em linear complexity}) of $s^L$ is defined to be the smallest positive 
integer $\ell$ such that there are constants $c_0=1, c_1, \cdots, c_\ell \in \gf(q)$ 
satisfying 
\begin{eqnarray*} 
-c_0s_i=c_1s_{i-1}+c_2s_{i-2}+\cdots +c_ls_{i-\ell} \mbox{ for all } \ell \leq i<L. 
\end{eqnarray*} 
In engineering terms, such a polynomial $c(x)=c_0+c_1x+\cdots +c_lx^l$ 
is called the {\em feedback  polynomial} of a shortest linear feedback 
shift register 
(LFSR) that generates $s^L$. Such an integer always exists for finite sequences  $s^L$. When 
$L$ is $\infty$, a sequence $s^{\infty}$ is called a semi-infinite 
sequence. If there is no such an integer for a semi-infinite sequence 
$s^{\infty}$, its linear span is defined to be $\infty$. The linear 
span of the zero sequence is defined to be zero. 
For ultimately periodic semi-infinite sequences such an $\ell$ always 
exists. 

Let $s^{\infty}$ be a sequence of period $L$ over $\gf(q)$. 
Any feedback polynomial of $s^{\infty}$ is called a {\em characteristic 
polynomial}. The characteristic polynomial with the smallest degree is 
called the {\em minimal polynomial} of the periodic sequence $s^{\infty}$. 
Since we require that the constant term of any characteristic polynomial 
be 1, the minimal polynomial of any periodic sequence $s^{\infty}$ must 
be unique. In addition, any characteristic polynomial must be a multiple 
of the minimal polynomial.    

For periodic sequences, there are a few ways to determine their linear 
span and minimal polynomials. One of them is given in \cite[Theorem 5.3]{DXS}. 
The other one is given in the following lemma \cite{Antweiler} 

\begin{lemma} \label{lem-ls2} 
Any sequence $s^{\infty}$ over $\gf(q)$ of period $q^m-1$ has a unique expansion of the form  
\begin{equation*}
s_t=\sum_{i=0}^{q^m-2}c_{i}\alpha^{it}, \mbox{ for all } t\ge 0,
\end{equation*}
where $\alpha$ is a generator of $\gf(q^m)^*$ and $c_i \in \gf(q^m)$.
Let the index set $I=\{i \left.\right| c_i\neq 0\}$, then the minimal polynomial $\m_s(x)$ of $s^{\infty}$ is 
\begin{equation*}
\m_s(x)=\prod_{i\in I}(1-\alpha^i x),
\end{equation*}
and the linear span of $s^{\infty}$ is $|I|$.
\end{lemma}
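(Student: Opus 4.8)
The plan is to first establish existence and uniqueness of the expansion, and then to read the minimal polynomial off directly from the defining recurrence.

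For the expansion, observe that since $\alpha$ has order $n=q^m-1$ in $\gf(q^m)^*$, the powers $\alpha^0,\alpha^1,\ldots,\alpha^{n-1}$ are pairwise distinct and satisfy $\alpha^{in}=1$ for every $i$. Hence the right-hand side $\sum_{i=0}^{n-1}c_i\alpha^{it}$ is automatically periodic in $t$ with period $n$, so it suffices to match the two sides for $0\le t\le n-1$. Viewed as a linear system in the unknowns $c_0,\ldots,c_{n-1}$, its coefficient matrix is the Vandermonde matrix $(\alpha^{it})_{0\le t,i\le n-1}$ in the distinct nodes $\alpha^i$, hence invertible over $\gf(q^m)$. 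This yields a unique $(c_0,\ldots,c_{n-1})\in\gf(q^m)^n$ for every $(s_0,\ldots,s_{n-1})\in\gf(q^m)^n$, which is the asserted expansion. (One may even invert explicitly using $\sum_{t=0}^{n-1}\alpha^{jt}=-1$ when $n\mid j$ and $0$ otherwise, giving $c_i=-\sum_{t=0}^{n-1}s_t\alpha^{-it}$, but Vandermonde invertibility already suffices.)

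Next I would characterize the characteristic polynomials of $s^\infty$. Let $f(x)=f_0+f_1x+\cdots+f_\ell x^\ell$ with $f_0=1$. Substituting $s_{t-j}=\sum_{i\in I}c_i\alpha^{i(t-j)}$ into the recurrence $\sum_{j=0}^{\ell}f_js_{t-j}=0$ and interchanging the order of summation gives
\begin{equation*}
\sum_{j=0}^{\ell} f_j\, s_{t-j} = \sum_{i \in I} c_i\, f(\alpha^{-i})\, \alpha^{it}, \qquad f(\alpha^{-i})=\sum_{j=0}^{\ell}f_j\alpha^{-ij}.
\end{equation*}
The sequences $(\alpha^{it})_t$ for distinct $i\in I$ are linearly independent over $\gf(q^m)$ (again by nonvanishing of the Vandermonde determinant in the distinct nodes $\alpha^i$), and each $c_i\ne 0$; hence the left-hand side vanishes for all $t\ge\ell$ if and only if $f(\alpha^{-i})=0$ for every $i\in I$. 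Thus $f(x)$ is a characteristic polynomial of $s^\infty$ exactly when each $\alpha^{-i}$, $i\in I$, is a root of $f$.

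Finally I would conclude. Any characteristic polynomial is nonzero (its constant term is $1$) and must vanish at the $|I|$ distinct points $\alpha^{-i}$, $i\in I$, so it has degree at least $|I|$; in particular the linear span is at least $|I|$. Conversely $\prod_{i\in I}(1-\alpha^i x)$ has constant term $1$, degree $|I|$, and vanishes at every $\alpha^{-i}$, so it is itself a characteristic polynomial of minimal degree. By uniqueness of the minimal polynomial under the constant-term-$1$ normalization it must equal $\m_s(x)$, and the linear span equals its degree $|I|$; the degenerate case $I=\varnothing$ (the zero sequence) gives the empty product $\m_s(x)=1$ and span $0$, consistent with the conventions. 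I expect the main obstacle to be bookkeeping with the two reciprocal-related conventions in play: the frequencies $\alpha^i$ in the expansion correspond to roots $\alpha^{-i}$ of the feedback polynomial, and the answer is normalized to have constant term $1$ rather than to be monic as the quotient in Lemma \ref{lem-ls1} would produce. (Indeed, one could instead invoke Lemma \ref{lem-ls1} and compute $\gcd(x^n-1,S^n(x))$ via common roots, using $S^n(\alpha^{\ell})=-c_{-\ell}$ to see that the common roots are exactly the $\alpha^{-i}$ with $i\notin I$; this recovers the monic form $\prod_{i\in I}(x-\alpha^{-i})$, a scalar multiple of the claimed product.) Keeping the sign of the exponent and the normalization straight is where an error is most likely to creep in, whereas the linear-independence step is routine once the Vandermonde structure is noticed.
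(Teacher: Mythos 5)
Your proof is correct, but note that the paper itself offers no proof of this lemma: it is quoted from the reference [Antweiler--B\"omer], with only a remark that the stated form is the reciprocal-modified version of the original. So there is no in-paper argument to compare against; what you have done is supply the standard discrete-Fourier-transform (Blahut-type) proof underlying the cited result, and it is sound. The Vandermonde inversion gives existence and uniqueness of the expansion; the interchange of summations correctly reduces the recurrence $\sum_{j} f_j s_{t-j}=0$ to $f(\alpha^{-i})=0$ for all $i\in I$ (the one implicit step --- extending vanishing from $t\ge \ell$ to a full window of $n$ consecutive indices before applying linear independence of the sequences $(\alpha^{it})_t$ --- is immediate from periodicity, so no gap); and the degree count then pins down both $\m_s(x)=\prod_{i\in I}(1-\alpha^i x)$ and the linear span $|I|$, with the zero-sequence convention handled. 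Your closing remark about the two normalizations is also well taken and matches the paper's own caveat: the paper requires characteristic polynomials to have constant term $1$, under which your product is the right normalization, whereas the quotient in Lemma \ref{lem-ls1} is monic, i.e.\ $\prod_{i\in I}(x-\alpha^{-i})$, which agrees with your answer only up to the nonzero scalar $\prod_{i\in I}(-\alpha^i)$ --- a mild internal inconsistency of the paper that your cross-check via $S^n(\alpha^{\ell})=-c_{-\ell}$ correctly exposes rather than introduces.
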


It should be noticed that in some references the reciprocal of $\m_s(x)$ is called the minimal polynomial 
of the sequence $s^\infty$. So Lemma \ref{lem-ls2} is a modified version of the original one in \cite{Antweiler}.

\subsection{Dickson polynomials over $\gf(r)$}\label{sec-DPAPNPN} 

One hundred and sixteen years ago, Dickson introduced the following family of polynomials over 
$\gf(r)$ \cite{Dick96}: 
\begin{eqnarray}\label{eqn-1stDP}
D_h(x, a)=\sum_{i=0}^{\lfloor \frac{h}{2} \rfloor} \frac{h}{h-i} \binom{h-i}{i}  (-a)^i x^{h-2i}, 
\end{eqnarray} 
where $a \in \gf(r)$ and $h \ge 0$ is called the {\em order} of the polynomial. This family is 
referred to as the {\em Dickson polynomials of the first kind}.

Dickson polynomials of the second kind over $\gf(r)$ are defined by  
\begin{eqnarray}\label{eqn-2ndDP}
E_h(x, a)=\sum_{i=0}^{\lfloor \frac{h}{2} \rfloor} \binom{h}{h-i} (-a)^i x^{h-2i}, 
\end{eqnarray} 
where $a \in \gf(r)$ and $h \ge 0$ is called the {\em order} of the polynomial.

Dickson polynomials are an interesting topic of mathematics and engineering, and have many applications. 
For example, the Dickson polynomials $D_5(x, a)=x^5-ux-u^2x$ over $\gf(3^m)$ are 
employed to construct a family of planar functions \cite{CM,DY06}, and those planar 
functions give two families of commutative presemifields, planes,  several classes of 
linear codes \cite{CDY,YCD}, and two families of skew Hadamard difference sets \cite{DY06}. 
The reader is referred to \cite{LMT} for detailed information about Dickson polynomials. 
In this paper, we will employ Dickson polynomials of the first kind over finite fields to 
construct cyclic codes with some interesting parameters.

\section{The construction of cyclic codes from polynomials over $\gf(r)$}\label{sec-polycode} 

Given a polynomial $f(x)$ on $\gf(r)$, we define its associated sequence 
$s^\infty$ by 
\begin{eqnarray}\label{eqn-sequence}
s_i=\tr(f(\alpha^i+1)) 
\end{eqnarray}
for all $i \ge 0$, where $\alpha$ is a generator of $\gf(r)^*$ and $\tr(x)$ denotes 
the trace function from $\gf(r)$ to $\gf(q)$. 

It was demonstrated in \cite{Ding13,DZ2014,TQX} that the code $\C_s$ may have interesting 
parameters if the polynomial $f$ is properly chosen.  
The objective of this paper is to consider cyclic codes $\C_s$ defined 
by Dickson polynomials $f$ over $\gf(r)$ with small degrees. 


\section{Cyclic codes from the Dickson polynomial $D_{p^u}(x,a)$}\label{sec-tracex}

Since $q$ is a power of $p$, it is known that $D_{hp}(x,a)=D_{h}(x,a)^p$ \cite[Lemma 2.6 ]{LMT}.  
It then follows that 
$$ 
D_{p^u}(x,a)=x^{p^u}
$$
for all $a \in \gf(r)$. 

The code $\C_s$ over $\gf(q)$ defined by the Dickson polynomial $f(x)=D_{p^u}(x,a)=x^{p^u}$ 
over $\gf(q^m)$ are not new. However, for the completeness of cyclic codes from Dickson 
polynomials we state the following theorem without giving a proof.     

\begin{theorem}\label{thm-tracex} 
The code $\C_{s}$ defined by the Dickson polynomial  $D_{p^u}(x,a)=x^{p^u}$ has parameters 
$[n, n-m-\delta(1), d]$ and generator polynomial $\m_s(x)=(x-1)^{\delta(1)} m_{\alpha^{-p^u}}(x)$,    
where 
\begin{eqnarray*} 
\left\{ \begin{array}{l} 
d=4 \mbox{ if }  q=2 \mbox{ and } \delta(1)=1, \\
d=3 \mbox{ if }  q=2 \mbox{ and } \delta(1)=0, \\
d=3 \mbox{ if }  q>2 \mbox{ and } \delta(1)=1, \\
d=2 \mbox{ if }  q>2 \mbox{ and } \delta(1)=0,  
 \end{array} 
 \right. 
 \end{eqnarray*} 
where the function $\delta(x)$ and the polynomial $m_{\alpha^j}(x)$ were defined in Section \ref{sec-notations}. 
\end{theorem}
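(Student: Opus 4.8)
The plan is to reduce the defining sequence to a single trace of a monomial, read off the generator polynomial and linear span from Lemma~\ref{lem-ls2}, and then compute the minimum weight by viewing $\C_s$ as a (possibly extended) Hamming-type code through its parity-check matrix. First I would simplify the sequence $s^\infty$ of (\ref{eqn-sequence}). Since $q$ is a power of $p$, the Frobenius identity $(\alpha^i+1)^{p^u}=\alpha^{p^u i}+1$ holds in $\gf(r)$, and because $f(x)=D_{p^u}(x,a)=x^{p^u}$ we obtain
\[
s_i=\tr\!\left((\alpha^i+1)^{p^u}\right)=\tr(\alpha^{p^u i})+\tr(1),\qquad i\ge 0 .
\]
By the definition of $\delta$, the constant $\tr(1)$ is nonzero precisely when $\delta(1)=1$.

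Next I would determine $\m_s(x)$ and the dimension. Expanding the trace as $\tr(\alpha^{p^u i})=\sum_{j=0}^{m-1}(\alpha^{p^u q^j})^{i}$ puts $s_i$ into the form $\sum_{i'}c_{i'}\alpha^{i'i}$ required by Lemma~\ref{lem-ls2}, with $c_{0}=\tr(1)$ and $c_{i'}=1$ for $i'$ in the $q$-cyclotomic coset $C_{p^u}=\{p^u q^j \bmod n\}$, all other coefficients being $0$. Since $\gcd(p^u,n)=1$ and $\ord_n(q)=m$, the coset $C_{p^u}$ has exactly $m$ elements, none of them $0$. Thus the index set of Lemma~\ref{lem-ls2} is $I=C_{p^u}$ when $\delta(1)=0$ and $I=C_{p^u}\cup\{0\}$ when $\delta(1)=1$, so the linear span is $|I|=m+\delta(1)$. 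Taking the monic polynomial whose roots are $\{\alpha^{-i'}:i'\in I\}$, the factor coming from $\{0\}$ is $(x-1)^{\delta(1)}$ and the factor coming from $C_{p^u}$ is $\prod_{j=0}^{m-1}(x-\alpha^{-p^u q^j})=\m_{\alpha^{-p^u}}(x)$; the appearance of $\alpha^{-p^u}$ rather than $\alpha^{p^u}$ is exactly the root/reciprocal convention of Lemma~\ref{lem-ls2}. This yields the stated generator polynomial $g(x)=(x-1)^{\delta(1)}\m_{\alpha^{-p^u}}(x)$ of degree $m+\delta(1)$, and hence dimension $n-m-\delta(1)$.

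Finally I would compute $d$ from a parity-check matrix. The zeros of $g(x)$ are $\alpha^{-p^u q^j}$ for $0\le j<m$, together with $1$ when $\delta(1)=1$; writing $\beta=\alpha^{-p^u}$, a word $\bc=(c_0,\dots,c_{n-1})$ lies in $\C_s$ iff $\sum_t c_t\beta^{t}=0$ and, when $\delta(1)=1$, also $\sum_t c_t=0$. Because $\beta$ is a generator of $\gf(r)^*$, the elements $\beta^{t}$ run over all of $\gf(r)^*\cong\gf(q)^m\setminus\{0\}$, so over a $\gf(q)$-basis of $\gf(r)$ the check matrix $H$ has every nonzero vector of $\gf(q)^m$ as a column, with an all-ones row adjoined when $\delta(1)=1$. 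The minimum weight is the least number of $\gf(q)$-linearly dependent columns of $H$, and I would finish through the four standard cases: for $q>2,\ \delta(1)=0$ some two columns are dependent (a suitable $\beta^{t}/\beta^{t'}\in\gf(q)^*$), giving $d=2$; for $q=2,\ \delta(1)=0$ this is the Hamming code, giving $d=3$; adjoining the all-ones row raises these to $d=3$ for $q>2$ and to $d=4$ for $q=2$.

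The main obstacle I anticipate is this last paragraph, specifically showing that the short dependencies \emph{do} occur while none shorter survive once the all-ones row is present. The delicate case is $q=2,\ \delta(1)=1$: one must argue that every weight-$3$ dependency of the Hamming check matrix has odd support and is therefore annihilated by the parity (all-ones) row, whereas a weight-$4$ dependency survives, so that $d$ jumps from $3$ to $4$ rather than merely to $3$. The remaining cases reduce to elementary counting of scalar relations $\sum_k\lambda_k\beta^{t_k}=0$ with $\sum_k\lambda_k=0$ over $\gf(q)$, using $|\gf(q)^*|>1$ when $q>2$.
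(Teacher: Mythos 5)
Your proposal is correct, but note that the paper deliberately gives no proof of Theorem~\ref{thm-tracex} (``we state the following theorem without giving a proof''), so there is no internal argument to compare against; what you have written supplies the missing proof, and it does so in exactly the style the paper uses for its neighboring results. Your reduction $s_i=\tr(\alpha^{p^u i})+\tr(1)$ via Frobenius, the identification of the index set $I=C_{p^u}\cup\{0\}^{\delta(1)}$ with $|C_{p^u}|=m$ (justified correctly by $\gcd(p^u,n)=1$), and the resulting $\m_s(x)=(x-1)^{\delta(1)}\m_{\alpha^{-p^u}}(x)$ through Lemma~\ref{lem-ls2} is precisely the mechanism the paper deploys in Lemmas~\ref{lem-2DPDOGold} and~\ref{lem-52DPDOGold}; your minimum-distance analysis (primitive $\beta=\alpha^{-p^u}$, all nonzero vectors of $\gf(q)^m$ as columns, even-weight subcode of the Hamming code when $q=2$ and $\delta(1)=1$) matches the arguments in the proofs of Theorems~\ref{thm-2DO2Gold} and~\ref{thm-52DO2Gold} and the paper's own remark that for $q=2$ the code is the Hamming code or its even-weight subcode. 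Two points deserve one more line each in a polished write-up: (i) for $q>2$ and $\delta(1)=1$ you assert but do not exhibit a weight-$3$ word --- one should display a relation $c_1+c_2+c_3=0$, $c_1\beta^{t_1}+c_2\beta^{t_2}+c_3\beta^{t_3}=0$ with distinct exponents, e.g.\ fixing $c_1=1$, $c_2=\lambda$, $c_3=-1-\lambda$ with $\lambda\notin\{0,-1\}$ and solving for $\beta^{t_3}$, which works whenever $r$ is large enough; and (ii) the statement silently assumes nondegenerate parameters (e.g.\ $m=1$ with $q=3$ gives dimension $0$), an assumption the paper also leaves implicit, so you are not obliged to repair it, but flagging it would be careful. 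Your observation that for $q>2$ the column set contains scalar multiples --- so the code is \emph{not} a Hamming code and $d$ drops to $2$ (resp.\ $3$ with the parity row) --- is exactly the distinction the four-case formula encodes, and you handle it correctly.
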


When $q=2$, the code of Theorem \ref{thm-tracex} is equivalent to the binary Hamming weight or 
its even-weight subcode, and is thus optimal. The code is either optimal or almost optimal with 
respect to the Sphere Packing Bound.

\section{Cyclic codes from $D_2(x,a)=x^2-2a$}\label{} 

In this section we consider the code $\C_s$ defined by  $f(x)=D_2(x,a)=x^2-2a$ 
over $\gf(r)$. When $p=2$, this code was treated in Section \ref{sec-tracex}. When 
$p>2$, the following theorem is a variant of Theorem 5.2 in \cite{Ding121}, but 
has much stronger conclusions on the minimum distance of the code.  

\begin{theorem}\label{thm-DP2square} 
Let $p>2$ and $m \geq 3$. 
The code $\C_{s}$ defined by $f(x)=D_2(x,a)=x^2-2a$ has parameters 
$[n, n-2m-\delta(1-2a), d]$ and generator polynomial 
\begin{equation*} 
\m_s(x)= (x-1)^{\delta(1-2a)} m_{\alpha^{-1}}(x) m_{\alpha^{-2}}(x), 
\end{equation*} 
where 
\begin{eqnarray*} 
d=\left\{ \begin{array}{ll} 
4   & \mbox{ if }  q=3 \mbox{ and } \delta(1-2a)=0, \\
5  & \mbox{ if }  q=3 \mbox{ and } \delta(1-2a)=1, \\
3   & \mbox{ if }  q>3 \mbox{ and } \delta(1-2a)=0, \\
4 & \mbox{ if }  q>3 \mbox{ and } \delta(1-2a)=1,  
 \end{array} 
 \right. 
 \end{eqnarray*} 
 and the function $\delta(x)$ and the polynomial $\m_{\alpha^j}(x)$ were defined in Section \ref{sec-notations}. 
\end{theorem}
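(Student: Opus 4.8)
The plan is to follow the template of \cite{Ding121} in three stages: determine $\m_s(x)$ and the dimension from the trace expansion of the defining sequence, bound $d$ from below by the BCH bound applied to a suitable supercode, and then obtain the upper bounds by exhibiting explicit low-weight codewords.

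First I would expand the defining sequence. Since $f(x)=x^2-2a$,
\[
s_i=\tr\big((\alpha^i+1)^2-2a\big)=\tr(\alpha^{2i})+2\,\tr(\alpha^i)+\tr(1-2a).
\]
Writing $\tr(y)=\sum_{j=0}^{m-1}y^{q^j}$ and collecting the terms in $\alpha^{it}$ puts $s^\infty$ in the form of Lemma \ref{lem-ls2}: the summand $\tr(\alpha^{2i})$ contributes coefficient $1$ at every exponent of $C_2$, the summand $2\tr(\alpha^i)$ contributes the coefficient $2$ (nonzero because $p>2$) at every exponent of $C_1$, and the constant $\tr(1-2a)$ contributes at exponent $0$ precisely when $\delta(1-2a)=1$. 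Hence $I=C_1\cup C_2$ if $\delta(1-2a)=0$ and $I=C_1\cup C_2\cup\{0\}$ if $\delta(1-2a)=1$. A short cyclotomic-coset computation then gives $|C_1|=|C_2|=m$, $C_1\cap C_2=\varnothing$, and $0\notin C_1\cup C_2$ (the only thing to rule out for $|C_2|=m$ is $q^\ell\equiv 1\pmod{n/2}$ for some $0<\ell<m$, which would force $q^m+1=2q^\ell$, impossible for $q\ge 3$). Since $\{\alpha^{-i}:i\in C_1\}$ and $\{\alpha^{-i}:i\in C_2\}$ are exactly the root sets of $m_{\alpha^{-1}}(x)$ and $m_{\alpha^{-2}}(x)$, rewriting $\prod_{i\in I}(1-\alpha^i x)$ in monic form yields $\m_s(x)=(x-1)^{\delta(1-2a)}m_{\alpha^{-1}}(x)m_{\alpha^{-2}}(x)$ and the dimension $n-2m-\delta(1-2a)$.

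For the lower bound on $d$ I would pass to the supercode $\C'$ with generator $m_{\alpha^{-1}}(x)m_{\alpha^{-2}}(x)$, so that $\C_s\subseteq\C'$ and $d(\C_s)\ge d(\C')$. The defining set of $\C'$ is $C_{-1}\cup C_{-2}$, which always contains the consecutive pair $\{n-2,n-1\}$; when $q=3$ it also contains $n-3$, because $-3=-1\cdot q\in C_{-1}$, giving the consecutive triple $\{n-3,n-2,n-1\}$. The BCH bound therefore yields $d(\C')\ge 3$ for $q>3$ and $d(\C')\ge 4$ for $q=3$, which are exactly the stated lower bounds in all four cases. Note that this argument deliberately discards the extra zero at $\alpha^0$ present when $\delta(1-2a)=1$, so the bound is uniform in $\delta(1-2a)$.

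The upper bound is where the real work lies. A word of weight $w$ supported at positions $i_1,\dots,i_w$ with values $b_1,\dots,b_w\in\gf(q)^*$ lies in $\C_s$ iff the elements $u_l:=\alpha^{-i_l}$ satisfy $\sum_l b_lu_l=\sum_l b_lu_l^2=0$ (together with $\sum_l b_l=0$ when $\delta(1-2a)=1$). When $q>3$, that is $q\ge 5$, one may take all $u_l$ inside $\gf(q)^*$ (there are $q-1\ge 4$ of them), so these become $\gf(q)$-linear conditions; the associated $2\times 3$ or $3\times 4$ Vandermonde system has a one-dimensional kernel with all coordinates nonzero, producing a weight-$3$ word when $\delta(1-2a)=0$ and a weight-$4$ even-like word when $\delta(1-2a)=1$, so that $d\le 3$ and $d\le 4$ respectively. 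The case $q=3$ is the main obstacle: since $|\gf(3)^*|=2$ the support cannot be confined to the subfield, and one must instead exhibit elements $u_1,\dots,u_w\in\gf(3^m)$ and signs $b_l\in\{1,-1\}$ for which the relevant power sums vanish, yielding a weight-$4$ word when $\delta(1-2a)=0$ and a weight-$5$ word when $\delta(1-2a)=1$. I expect this explicit $\gf(3^m)$ construction, rather than the coset bookkeeping or the BCH estimate, to be the crux of the argument.
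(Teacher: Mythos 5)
The paper itself offers no proof of Theorem \ref{thm-DP2square} (it is stated as a variant of Theorem 3.2 of \cite{Ding121}), so your attempt can only be measured against the template the paper uses for its proved analogues (Lemma \ref{lem-2DPDOGold}, Theorems \ref{thm-2DO2Gold} and \ref{thm-34DO2Gold}) --- and your first two stages match that template and are correct. The expansion $s_i=\tr(\alpha^{2i})+2\tr(\alpha^i)+\tr(1-2a)$, the observation that $2\neq 0$ since $p>2$, the coset facts $|C_1|=|C_2|=m$, $C_1\cap C_2=\emptyset$ (your ruling out of $2q^\ell=q^m+1$ is the right computation), and the application of Lemma \ref{lem-ls2} correctly yield $\m_s(x)$ and the dimension. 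The BCH argument, including the extra zero $\alpha^{-3}=\alpha^{-q}\in$ conjugates of $\alpha^{-1}$ when $q=3$, gives the stated lower bounds, and the Vandermonde kernel argument over $\gf(q)^*$ (valid since $q-1\ge 4$ when $q\ge 5$) genuinely settles the upper bounds $d\le 3$ and $d\le 4$ for $q>3$. One remark: discarding the zero $\alpha^0$ when $\delta(1-2a)=1$ is an unforced weakening --- the exponents $n-2,n-1,0$ form a consecutive run modulo $n$, so BCH already gives $d\ge 4$ for $q>3$ and $d\ge 5$ for $q=3$ in the $\delta=1$ cases, which together with the upper bounds would determine $d$ exactly there.

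The genuine gap is exactly where you flagged it: for $q=3$ you assert, but do not construct, the weight-$4$ (for $\delta(1-2a)=0$) and weight-$5$ (for $\delta(1-2a)=1$) codewords, so the claims $d=4$ and $d\le 5$ are unproven. This step is not routine, because the sign pattern matters: for weight $4$, the patterns $(1,-1,1,-1)$ and $(1,1,-1,-1)$ admit \emph{no} solutions (equal sums and equal sums of squares force $uw=vz$, hence equal pairs), so one must use $(1,1,1,-1)$, for which the two power-sum conditions reduce to choosing distinct nonzero $u,v,w$ with $uv+uw+vw=0$ and setting $z=u+v+w$, then checking $z\notin\{0,u,v,w\}$; this has solutions whenever $m\ge 2$. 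For weight $5$ with the even-like condition, note $(1,1,1,1,-1)$ gives $\sum b_l=3=0$ in $\gf(3)$, and the conditions reduce to $\sum_{i<j\le 4}u_iu_j=0$ with $u_5=u_1+u_2+u_3+u_4$. Supplying these constructions (and noting the implicit hypothesis $m\ge 2$ when $q=3$, since $n=2$ makes the statement vacuous for $m=1$) would complete your argument.
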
 

\begin{proof}
The conclusion on the minimal polynomial $\m_s(x)$ was proved in Lemma 5.1 in \cite{Ding121}. 
Hence, the conclusion on the dimension of $\C_s$ then follows. It remains to determine the 
minimum distance of the code.  

When $\delta(1-2a)=1$, the desired conclusions on $d$ were proved in \cite{CDY}. 
When $\delta(1-2a)=0$, the desired conclusions on $d$ can be proved similarly by 
modifying the proof of Theorem 7 in \cite{CDY}.    
\end{proof}

The code of Theorem \ref{thm-DP2square} is either optimal or almost optimal for all $m \geq 2$. 
We now prove this statement as follows. 
\begin{itemize}
\item When $q=3$ and $\delta(1-2a)=0$, $\C_s$ has parameters $[3^m-1, 3^m-1-2m, 4]$. The Sphere 
      Packing Bound shows that there is no linear code over $\gf(3)$ with parameters 
      $[3^m-1, 3^m-1-2m, 5]$. By definition, $\C_s$  is optimal. 
\item When $q=3$ and $\delta(1-2a)=1$, $\C_s$ has parameters $[3^m-1, 3^m-1-2m-1, 5]$. The Sphere 
      Packing Bound shows that there is no linear code over $\gf(3)$ with parameters 
      $[3^m-1, 3^m-1-2m, 5]$. By definition, $\C_s$  is optimal.       
\item When $q>3$ and $\delta(1-2a)=0$, $\C_s$ has parameters $[q^m-1, q^m-1-2m, 3]$ and is thus almost 
      optimal, as the Sphere Packing Bound shows that any linear code over $\gf(q)$ with
       parameters 
      $[q^m-1, q^m-2m, 4]$ is optimal.   
\item When $q>3$ and $\delta(1-2a)=1$, $\C_s$ has parameters $[q^m-1, q^m-1-2m-1, 4]$. The Sphere 
      Packing Bound shows that there is no linear code over $\gf(q)$ with parameters 
      $[q^m-1, q^m-1-2m-1, 5]$. By definition, $\C_s$  is optimal.         
\end{itemize}

\section{Cyclic codes from $D_3(x,a)=x^3-3ax$}\label{sec-order3} 

In this section we study the code $\C_s$ defined by the Dickson polynomial 
$D_3(x,a)=x^3-3ax$. We need to distinguish among the three cases: $p=2$, 
$p=3$ and $p \ge 5$. The case that $p=3$ was covered in Section \ref{sec-tracex}. 
So we need to consider only the two remaining cases.  

We first handle the case $q=p=2$ and  prove the following lemma.  

\begin{lemma}\label{lem-2DPDOGold} 
Let $q=p=2$. 
Let $s^{\infty}$ be the sequence of (\ref{eqn-sequence}), where $f(x)=D_3(x,a)=x^3-3ax=x^3+ax$.  
Then the minimal polynomial $\m_s(x)$ of  $s^{\infty}$ is given by 
\begin{eqnarray*}
\m_s(x)=\left\{ 
\begin{array}{ll} 
(x-1)^{\delta(1)} m_{\alpha^{-3}}(x)  & \mbox{if } a =0, \\
(x-1)^{\delta(1+a)} m_{\alpha^{-1}}(x) m_{\alpha^{-3}}(x) & \mbox{if } a \ne 0 
\end{array} 
\right. 
\end{eqnarray*} 
where $m_{\alpha^{-j}}(x)$ and the function $\delta(x)$ were defined in Section \ref{sec-notations}, 
and the linear span $\ls_s$ of $s^{\infty}$ is given by 
\begin{eqnarray*}
\ls_s=\left\{ 
\begin{array}{ll} 
\delta(1) + m   & \mbox{if } a =0, \\
\delta(1+a) + 2m & \mbox{if } a \ne 0.  
\end{array} 
\right. 
\end{eqnarray*} 
\end{lemma}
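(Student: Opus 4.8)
The plan is to compute the sequence $s^\infty$ in closed form, cast it into the additive shape demanded by Lemma \ref{lem-ls2}, and then read off the index set. First I would use that $p=2$ forces $-3\equiv 1$, so $f(x)=D_3(x,a)=x^3+ax$, and expand $f(\alpha^t+1)=(\alpha^t+1)^3+a(\alpha^t+1)$. Since $(\alpha^t+1)^2=\alpha^{2t}+1$ in characteristic two, this gives $(\alpha^t+1)^3=\alpha^{3t}+\alpha^{2t}+\alpha^t+1$, hence $f(\alpha^t+1)=\alpha^{3t}+\alpha^{2t}+(1+a)\alpha^t+(1+a)$. Applying $\tr$ and exploiting the Frobenius invariance $\tr(\beta^2)=\tr(\beta)$ to replace $\tr(\alpha^{2t})$ by $\tr(\alpha^t)$, the two linear terms merge into $\tr(a\alpha^t)$, leaving $s_t=\tr(\alpha^{3t})+\tr(a\alpha^t)+\tr(1+a)$ for all $t\ge 0$. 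This trace simplification is the heart of the argument and the step I expect to be the most delicate to present cleanly.

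Next I would expand each trace as a sum of conjugates, $\tr(\gamma)=\sum_{j=0}^{m-1}\gamma^{2^j}$, so as to bring $s_t$ into the form $\sum_i c_i\alpha^{it}$ required by Lemma \ref{lem-ls2}. The term $\tr(\alpha^{3t})=\sum_j(\alpha^{3\cdot 2^j})^t$ places coefficient $1$ on every index of the cyclotomic coset $C_3$; the term $\tr(a\alpha^t)=\sum_j a^{2^j}\alpha^{2^j t}$ places the nonzero coefficient $a^{2^j}$ on each index $2^j$ of $C_1$, and does so exactly when $a\ne 0$; and $\tr(1+a)\in\gf(2)$ contributes to the index $0$, nontrivially precisely when $\delta(1+a)=1$ (note $\tr(1+a)=\tr(1)$ and $\delta(1+a)=\delta(1)$ when $a=0$). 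Since $0$, $C_1$ and $C_3$ are pairwise disjoint, these contributions cannot cancel, so the index set of Lemma \ref{lem-ls2} is $I=C_3\cup(\{0\}\text{ if }\delta(1+a)=1)$ when $a=0$, and $I=C_1\cup C_3\cup(\{0\}\text{ if }\delta(1+a)=1)$ when $a\ne 0$. Reading off $\ls_s=|I|$ then yields the stated linear spans, once one records that $|C_1|=|C_3|=m$ in the relevant range of $m$.

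Finally, Lemma \ref{lem-ls2} gives $\m_s(x)=\prod_{i\in I}(1-\alpha^i x)$, and I would match this with the claimed factored form. For a full-size coset $C_j$ the roots $\{\alpha^{-i}:i\in C_j\}$ are exactly the conjugates of $\alpha^{-j}$, so $\prod_{i\in C_j}(1-\alpha^i x)$ and the monic $m_{\alpha^{-j}}(x)$ share all their roots; they coincide because in characteristic two the leading coefficient of the product is $\prod_{i\in C_j}\alpha^i=\alpha^{\sum_{i\in C_j}i}=1$, the exponent sum being $j(2^m-1)\equiv 0\pmod n$. The index $0$ contributes the factor $1-x=x-1$, i.e. $(x-1)^{\delta(1+a)}$. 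Assembling the pieces produces $\m_s(x)=(x-1)^{\delta(1)}m_{\alpha^{-3}}(x)$ for $a=0$ and $(x-1)^{\delta(1+a)}m_{\alpha^{-1}}(x)m_{\alpha^{-3}}(x)$ for $a\ne 0$. The main obstacle is the bookkeeping in the degenerate small cases, where a coset may fail to have full size $m$ or where $C_3$ could collapse onto $\{0\}$ (for instance $m=2$, forcing $3\equiv 0\pmod n$), causing the coefficients in $\gf(2)$ to merge or vanish; these must be excluded or checked separately so that the count $|I|$ and the coset-to-minimal-polynomial identification remain valid.
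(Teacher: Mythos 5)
Your proof is correct and follows essentially the same route as the paper's: both reduce $s_t$ to the trace form $\tr(\alpha^{3t})+\tr(a\alpha^t)+\tr(1+a)$ via $\tr(x^2)=\tr(x)$, verify $\ell_1=\ell_3=m$ with $C_1\cap C_3=\emptyset$, and invoke Lemma~\ref{lem-ls2}. You merely make explicit what the paper leaves implicit (the coefficient bookkeeping on $\{0\}\cup C_1\cup C_3$, the identification $\prod_{i\in C_j}(1-\alpha^i x)=m_{\alpha^{-j}}(x)$, and the tacit restriction excluding small $m$ such as $m=2$, where $C_3$ collapses), which is sound.
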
 

\begin{proof} 
Note that 
$$ 
D_3(x+1, a)=x^3+x^2+(1+a)x+1+a. 
$$
We have then 
$$ 
\tr(D_3(x+1, a))=\tr(x^3+ax)+\tr(1+a). 
$$
By definition, 
\begin{eqnarray}\label{eqn-2DPDOGold2}
s_t = \tr((\alpha^t)^3+a\alpha^t)+\tr(1+a).
\end{eqnarray}

It can be easily proved that $\ell_{1}=\ell_{n-1}=\ell_{3}=\ell_{n-3}=m$ 
and that $C_1 \cap C_{3}=\emptyset$. 
The desired conclusions on the linear span and the minimal polynomial $\m_s(x)$ then follow from Lemma \ref{lem-ls2} 
and (\ref{eqn-2DPDOGold2}). 
\end{proof}

The following theorem gives information on the code $\C_{s}$.    

\begin{theorem}\label{thm-2DO2Gold} 
Let $q=p=2$ and let $m \geq 4$. 
Then the binary code $\C_{s}$ defined by the sequence of Lemma \ref{lem-2DPDOGold} has parameters 
$[n, n-\ls_s, d]$ and generator polynomial $\m_s(x)$, where $\m_s(x)$ and $\ls_s$ are 
given in Lemma \ref{lem-2DPDOGold}, and 
\begin{eqnarray*}
d=\left\{ 
\begin{array}{ll} 
2   & \mbox{if } a =0 \mbox{ and }  \delta(1)=0, \\
4   & \mbox{if } a =0 \mbox{ and }  \delta(1)=1, \\
5   & \mbox{if } a \ne 0 \mbox{ and }  \delta(1+a)=0, \\
6   & \mbox{if } a \ne 0 \mbox{ and }  \delta(1+a)=1. 
\end{array} 
\right. 
\end{eqnarray*} 
\end{theorem}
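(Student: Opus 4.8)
The plan is to read the length, dimension, and generator polynomial straight off the construction together with Lemma~\ref{lem-ls1} and Lemma~\ref{lem-2DPDOGold}: the code $\C_{s}$ has generator polynomial $\frac{x^{n}-1}{\gcd(S(x),x^{n}-1)}=\m_s(x)$, and since the dimension of a cyclic code is $n$ minus the degree of its generator polynomial, the dimension equals $n-\deg(\m_s(x))=n-\ls_s$. Hence only the four claims on the minimum weight $d$ need work, and I would organize everything around the defining set of $\C_s$ (the exponents $j$ with $\alpha^{-j}$ a root of $\m_s(x)$), expressed as powers of a convenient primitive $n$-th root of unity, and then apply the BCH bound.

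For the two $a=0$ cases I would first record that $\delta(1)=\tr(1)=m\bmod 2$, so $\delta(1)=0$ is equivalent to $m$ even, hence to $3\mid n$, while $\delta(1)=1$ is equivalent to $\gcd(3,n)=1$. When $\delta(1)=0$ the generator polynomial is $m_{\alpha^{-3}}(x)$; since $3\mid n$, the word $x^{0}+x^{n/3}$ satisfies $\alpha^{0}+\alpha^{-3\cdot n/3}=1+\alpha^{-n}=0$, giving a weight-$2$ codeword, while no weight-$1$ word lies in the code because $\alpha^{-3i}\neq 0$; hence $d=2$. When $\delta(1)=1$ the element $\beta:=\alpha^{-3}$ is primitive, so taking $\beta$ as the primitive $n$-th root of unity the roots of $(x-1)m_{\alpha^{-3}}(x)$ include $\beta^{0},\beta^{1},\beta^{2}$, three consecutive roots, whence $d\ge 4$ by the BCH bound. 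For the matching upper bound I would identify the code with generator $m_{\alpha^{-3}}(x)$ as a Hamming code $[n,n-m,3]$ (its check equation is $\sum_i c_i\beta^{i}=0$ with $\beta$ primitive), so that $(x-1)m_{\alpha^{-3}}(x)$ generates its even-weight subcode, of minimum weight exactly $4$; thus $d=4$.

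For the two $a\neq0$ cases I would set $\beta=\alpha^{-1}$, which is again a primitive $n$-th root of unity, and describe the defining set as a union of $2$-cyclotomic cosets. The roots of $m_{\alpha^{-1}}(x)m_{\alpha^{-3}}(x)$ are the $\beta^{j}$ with $j\in C_1\cup C_3$; since $\{1,2,4\}\subseteq C_1$ and $\{3,6\}\subseteq C_3$, the exponents $1,2,3,4$ all occur, and these four consecutive roots yield $d\ge 5$ by the BCH bound, settling the case $\delta(1+a)=0$. When $\delta(1+a)=1$ the extra factor $x-1$ contributes the root $\beta^{0}$, so now the exponents $0,1,2,3,4$ all occur; five consecutive roots give $d\ge 6$.

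Since the lower bounds are immediate once the consecutive runs are displayed, I expect the main obstacle to be the exact upper bound $d\le 4$ in the $a=0$, $\delta(1)=1$ case, for which the cleanest route is the Hamming-code identification above (equivalently, exhibiting one explicit weight-$4$ codeword as a sum of two weight-$3$ Hamming codewords sharing a coordinate). A secondary point requiring care is the bookkeeping that excludes the degenerate small instances (for example $m=2$, $n=3$): these are in fact ruled out by the coset hypotheses $\ell_1=\ell_3=m$ and $C_1\cap C_3=\emptyset$ supplied by Lemma~\ref{lem-2DPDOGold}, which simultaneously guarantee that the factors of $\m_s(x)$ are distinct and that the runs of roots I invoke have the stated length.
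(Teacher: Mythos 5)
Your proposal is correct and follows essentially the same route as the paper: an explicit weight-$2$ codeword when $a=0$ and $\delta(1)=0$, the Hamming-code identification and its even-weight subcode for $d=4$ when $\delta(1)=1$, and the BCH bound applied to the runs $\{1,2,3,4\}$ (resp.\ $\{0,1,2,3,4\}$) of zeros when $a\neq 0$. The only cosmetic difference is that you work with the primitive root $\beta=\alpha^{-1}$ (resp.\ $\alpha^{-3}$) directly where the paper passes to the reciprocal polynomial $\bar{\m}_s(x)$ with zeros $\alpha^{i}$, and you add a redundant BCH lower bound in the $a=0$, $\delta(1)=1$ case.
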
 

\begin{proof} 
The dimension of $\C_{s}$ follows from Lemma \ref{lem-2DPDOGold} and the definition of the 
code $\C_s$.  We need to prove the conclusion on the minimum distance $d$ of $\C_{s}$. 

We consider the case $a=0$ first. Since $\alpha^3 \ne 0$, $d \ge 2$. On the other hand, 
if $\delta(1)=0$, then $m$ is even and $(\alpha^{3})^{(2^m-1)/3}=1$. Hence $\C_s$ has 
a codeword of Hamming weight 2. Whence, $d=2$. If $\delta(1)=1$, then $m$ is odd and 
$\gcd(3, 2^m-1)=1$. Hence, $\alpha^3$ is a primitive element of $\gf(2^m)$ and the code 
$\tilde{\C}_s$ generated by $\m_{\alpha^{-3}}(x)$ is equivalent to the binary Hamming 
code, and has thus minimum weight 3. Hence the even-weight 
subcode $\C_s$ of  $\tilde{\C}_s$ has minimum weight 4. 

We now consider the case that $a \ne 0$. 
When $\delta(1+a)=1$, it was proved in \cite{DFZ16} that $d=5$. When $\delta(1+a)=0$,  
the code is the even-like subcode of the code in the case $\delta(1+a)=1$. In this case,  
$d=6$.   
\end{proof} 

\begin{remark} 
When $a=0$ and $\delta(1)=1$, the code is equivalent to the even-weight subcode of 
the Hamming code. We are mainly interested in the case that $a \ne 0$. When $a=1$, the code 
$\C_s$ is a double-error correcting binary BCH code or its even-like  
subcode. Theorem \ref{thm-2DO2Gold} shows that well-known classes of cyclic codes can be 
constructed with Dickson polynomials of order 3. The code is either optimal or almost optimal. 
\end{remark}

Now we investigate the case $q=p^t$, where $p \ge 5$ or $p=2$ and $t\ge 2$. 

\begin{lemma}\label{lem-3tDPDOGold} 
Let $q=p^t$, where $p \ge 5$ or $p=2$ and $t\ge 2$. 
Let $s^{\infty}$ be the sequence of (\ref{eqn-sequence}), where $f(x)=D_3(x,a)=x^3-3ax$.  
Then the minimal polynomial $\m_s(x)$ of  $s^{\infty}$ is given by 
\begin{eqnarray*}
\m_s(x)=\left\{ 
\begin{array}{ll} 
(x-1)^{\delta(-2)} m_{\alpha^{-3}}(x)  m_{\alpha^{-2}}(x) & \mbox{if } a =1, \\
(x-1)^{\delta(1-3a)} m_{\alpha^{-3}}(x) m_{\alpha^{-2}}(x)  m_{\alpha^{-1}}(x) & \mbox{if } a \ne 1 
\end{array} 
\right. 
\end{eqnarray*} 
where $m_{\alpha^{-j}}(x)$ and the function $\delta(x)$ were defined in Section \ref{sec-notations}, 
and the linear span $\ls_s$ of $s^{\infty}$ is given by 
\begin{eqnarray*}
\ls_s=\left\{ 
\begin{array}{ll} 
\delta(-2) + 2m   & \mbox{if } a =1, \\
\delta(1+a) + 3m & \mbox{if } a \ne 1.  
\end{array} 
\right. 
\end{eqnarray*} 
\end{lemma}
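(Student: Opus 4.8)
The plan is to imitate the proof of Lemma \ref{lem-2DPDOGold}, replacing the characteristic-two simplifications there by the general computation that is valid whenever $p\neq 3$. First I would substitute $x\mapsto x+1$ into $D_3(x,a)=x^3-3ax$ and expand, obtaining
\[
D_3(x+1,a)=x^3+3x^2+3(1-a)x+(1-3a).
\]
Setting $x=\alpha^t$ and using the $\gf(q)$-linearity of $\tr$ (so that the scalar $3\in\gf(p)\subseteq\gf(q)$ may be pulled out of the square term) gives
\[
s_t=\tr(\alpha^{3t})+3\tr(\alpha^{2t})+\tr\!\bigl(3(1-a)\alpha^t\bigr)+\tr(1-3a).
\]
This is already in the shape to which Lemma \ref{lem-ls2} applies, once each summand is written as $\tr(\beta\alpha^{it})=\sum_{j=0}^{m-1}\beta^{q^j}\alpha^{iq^jt}$.

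Next I would read off the index set $I=\{i : c_i\neq 0\}$ of Lemma \ref{lem-ls2}. The cube term contributes coefficient $1$ on every exponent of $C_3$, and the square term contributes coefficient $3$ on every exponent of $C_2$; since $p\neq 3$ we have $3\neq 0$, so $C_2$ and $C_3$ lie in $I$ unconditionally. The linear term $\tr(3(1-a)\alpha^t)$ contributes the coefficients $(3(1-a))^{q^j}$ on $C_1$, which are nonzero exactly when $3(1-a)\neq 0$, i.e.\ when $a\neq 1$; this is precisely the case distinction in the statement. Finally the constant $\tr(1-3a)$ contributes to the exponent $0$, giving a nonzero $c_0$ iff $\delta(1-3a)=1$ and hence the factor $(x-1)^{\delta(1-3a)}$ (note $1-3a=-2$ when $a=1$). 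Assembling $I$ and using the identification $\prod_{i\in C_j}(1-\alpha^i x)=m_{\alpha^{-j}}(x)$, valid up to the scalar fixed by the constant-term-one normalization, yields the stated $\m_s(x)$; the linear span $\ls_s=|I|$ then follows by counting, namely $2m+\delta(-2)$ when $a=1$ (no $C_1$ block) and $3m+\delta(1-3a)$ when $a\neq 1$.

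The step I expect to be the main obstacle is verifying the cyclotomic-coset facts that make the three trace blocks independent, namely $\ell_1=\ell_2=\ell_3=m$ together with the pairwise disjointness of $C_0,C_1,C_2,C_3$; without disjointness, coefficients from distinct blocks could collide on a common exponent and possibly cancel, corrupting $I$. To prove $\ell_i=m$ for $i\in\{1,2,3\}$ I would note that $\ell_i$ is the least $\ell$ with $(q^m-1)\mid i(q^\ell-1)$, equivalently $\tfrac{q^m-1}{\gcd(i,q^m-1)}\mid(q^\ell-1)$, and since $\gcd(i,q^m-1)\le 3$ while $q\ge 4$ throughout the range of this lemma, one checks $q^\ell-1<\tfrac{q^m-1}{\gcd(i,q^m-1)}$ for every $\ell<m$, forcing $\ell=m$. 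Disjointness then reduces to checking that neither $2$ nor $3$ is a $q$-power multiple of $1$ or of each other modulo $n$, and that none of $1,2,3$ is $\equiv 0\pmod n$; these hold outside a few degenerate small parameters (for instance $q=4,\,m=1$, where $3\equiv 0\pmod n$), which I would flag and exclude. Once these coset statements are secured, the remaining argument is the bookkeeping described above.
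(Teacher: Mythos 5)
Your proposal is correct and follows essentially the same route as the paper's proof: expand $D_3(x+1,a)=x^3+3x^2+3(1-a)x+1-3a$, write $s_t$ as the corresponding sum of traces, and apply Lemma \ref{lem-ls2} together with $\ell_1=\ell_2=\ell_3=m$ and the pairwise disjointness of $C_1,C_2,C_3$ --- facts the paper merely asserts with ``one can prove'' and which you verify correctly, including catching the degenerate case $q=4$, $m=1$ (where $3\equiv 0 \pmod n$) that the paper's statement silently overlooks. Note also that your linear span expression $\delta(1-3a)+3m$ for $a\neq 1$ is the one consistent with the stated minimal polynomial; the paper's printed $\delta(1+a)$ coincides with it only in characteristic $2$ and appears to be a typo carried over from Lemma \ref{lem-2DPDOGold}.
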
 

\begin{proof} 
Note that 
$$ 
D_3(x+1, a)=x^3+3x^2+3(1-a)x+1-3a. 
$$
We have then 
\begin{eqnarray}\label{eqn-3tDPDOGold2}
s_t = \tr((\alpha^t)^3+3(\alpha^t)^2+3(1-a)\alpha^t)+\tr(1-3a).
\end{eqnarray}

Since $q=p^t$, where $p \ge 5$ or $p=2$ and $t\ge 2$, one can prove 
that $\ell_{1}=\ell_{n-1}=\ell_{3}=\ell_{n-3}=\ell_2=\ell_{n-2}=m$ 
and that 
$$ 
C_1 \cap C_{2}=\emptyset, \ C_1 \cap C_{3}=\emptyset,  \ C_2 \cap C_{3}=\emptyset.  
$$ 
The desired conclusions on the linear span and the minimal polynomial $\m_s(x)$ then follow from Lemma \ref{lem-ls2} 
and (\ref{eqn-3tDPDOGold2}). 
\end{proof}

The following theorem provides information on the code $\C_{s}$.    

\begin{theorem}\label{thm-3tDO2Gold} 
Let $q=p^t$, where $p \ge 5$ or $p=2$ and $t\ge 2$. 
Then the code $\C_{s}$ defined by the sequence of Lemma \ref{lem-3tDPDOGold} has parameters 
$[n, n-\ls_s, d]$ and generator polynomial $\m_s(x)$, where $\m_s(x)$ and $\ls_s$ are 
given in Lemma \ref{lem-3tDPDOGold}, and 
\begin{eqnarray*}
\left\{ 
\begin{array}{ll} 
d \ge 3   & \mbox{if } a =1, \\
d \ge 4   & \mbox{if } a \ne 1 \mbox{ and }  \delta(1-3a)=0, \\
d \ge 5   & \mbox{if } a \ne 1 \mbox{ and }  \delta(1-3a)=1,  \\  
d \ge 5   & \mbox{if } a \ne 1 \mbox{ and }  \delta(1-3a)=0 \mbox{ and } q=4, \\
d \ge 6   & \mbox{if } a \ne 1 \mbox{ and }  \delta(1-3a)=1 \mbox{ and } q=4.  
\end{array} 
\right. 
\end{eqnarray*} 
\end{theorem}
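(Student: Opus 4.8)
The plan is to follow the strategy used in the proof of Theorem~\ref{thm-2DO2Gold}. The assertions on the dimension $n-\ls_s$ and on the generator polynomial $\m_s(x)$ are immediate from Lemma~\ref{lem-3tDPDOGold} and the definition of $\C_s$, so the whole task reduces to proving the stated lower bounds on the minimum distance $d$.

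To bound $d$ I would pass to the reciprocal code $\bar{\C}_s$, which is generated by $\bar{\m}_s(x)$ and, as recorded in Section~\ref{sec-notations}, has the same weight distribution as $\C_s$. Since the roots of $\bar{\m}_s(x)$ are the inverses of the roots of $\m_s(x)$, and the reciprocal of $m_{\alpha^{-j}}(x)$ is $m_{\alpha^{j}}(x)$, the zeros of $\bar{\m}_s(x)$ are exactly the $q$-conjugates of $\alpha^{j}$ for every $j$ occurring in $\m_s(x)$, together with $\alpha^{0}=1$ whenever the factor $(x-1)$ is present. Hence when $a=1$ the zero set of $\bar{\m}_s(x)$ contains $\{\alpha^{2},\alpha^{3}\}$; when $a\ne 1$ and $\delta(1-3a)=0$ it contains $\{\alpha^{1},\alpha^{2},\alpha^{3}\}$; and when $a\ne 1$ and $\delta(1-3a)=1$ it contains $\{\alpha^{0},\alpha^{1},\alpha^{2},\alpha^{3}\}$. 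Each of these is a block of consecutive powers of $\alpha$, so the BCH bound delivers $d\ge 3$, $d\ge 4$, and $d\ge 5$ respectively.

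The remaining, more delicate, step is the sharpened bound for $q=4$. In this case $p=2$ and $t=2$, so the Frobenius map is $x\mapsto x^{4}$ and $\alpha^{4}$ is a $q$-conjugate of $\alpha^{1}$. Therefore, whenever $a\ne 1$, the element $\alpha^{4}$ is an additional zero of $\bar{\m}_s(x)$, and the consecutive runs found above each lengthen by one: $\{\alpha^{1},\alpha^{2},\alpha^{3},\alpha^{4}\}$ when $\delta(1-3a)=0$, and $\{\alpha^{0},\alpha^{1},\alpha^{2},\alpha^{3},\alpha^{4}\}$ when $\delta(1-3a)=1$. The BCH bound applied to these longer runs then gives $d\ge 5$ and $d\ge 6$ respectively.

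I expect the real work to be bookkeeping rather than conceptual. One has to confirm that the listed powers of $\alpha$ are genuinely distinct zeros forming a single consecutive block (for $q=4$ this needs $m\ge 2$, so that $\alpha^{4}\notin\{\alpha^{0},\alpha^{1},\alpha^{2},\alpha^{3}\}$ modulo $n=4^{m}-1$), a fact underpinned by the coset computations $\ell_{1}=\ell_{2}=\ell_{3}=m$ and the pairwise disjointness of $C_{1},C_{2},C_{3}$ already established in Lemma~\ref{lem-3tDPDOGold}. The only genuinely subtle point is recognizing that the extra conjugate $\alpha^{4}$ appears precisely when $q=4$, and not for the other admissible moduli $q\ge 5$ or $q\in\{8,16,\dots\}$, which is exactly why the improved bounds are singled out for $q=4$.
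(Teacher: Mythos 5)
Your proposal is correct and follows essentially the same route as the paper: the dimension is read off from Lemma \ref{lem-3tDPDOGold}, and each lower bound on $d$ comes from the BCH bound applied to the consecutive zeros $\{\alpha^2,\alpha^3\}$, $\{\alpha^1,\alpha^2,\alpha^3\}$, or $\{\alpha^0,\ldots,\alpha^3\}$ of $\bar{\m}_s(x)$, extended by $\alpha^4$ when $q=4$. Your one genuine addition is welcome: the paper merely asserts that ``the case $q=4$ is special,'' whereas you make explicit the reason, namely that $4=q\cdot 1$ puts $4$ in the $q$-cyclotomic coset $C_1$, so $\alpha^4$ is a Frobenius conjugate of $\alpha$ and hence automatically a zero of $\bar{m}_{\alpha^{-1}}(x)$ exactly when $a\ne 1$.
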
 

\begin{proof} 
The dimension of $\C_{s}$ follows from Lemma \ref{lem-3tDPDOGold} and the definition of the 
code $\C_s$.  We now prove the conclusion on the minimum distance $d$ of $\C_{s}$. 

Note that $\bar{\m}_s(x)$ has the zeros $\alpha^2$ and  $\alpha^3$. By the BCH bound, 
$d \ge 3$ for all cases. If $a \ne 1$, $\bar{\m}_s(x)$ has the zeros $\alpha^i$ for all $i 
\in \{1,2,3\}$ and the additional zero $\alpha^0$ if $\delta(1-3a)=1$. Hence, the second and 
third lower bound on $d$ follow also from the BCH bound. 

The case $q=4$ is special. In this case, $\bar{\m}_s(x)$ has the zeros $\alpha^i$ for all $i 
\in \{1,2,3,4\}$ and the additional zero $\alpha^0$ if $\delta(1-3a)=1$. Hence, the last two lower  
bounds on $d$ also follow from the BCH bound. 
\end{proof} 

\begin{remark} 
The code $\C_{s}$ of Theorem \ref{thm-3tDO2Gold} is either a BCH code or the even-like subcode 
of a BCH code. 
One can similarly show that the code is either optimal 
or almost optimal. 

When $q=4$, $a \neq 1$, $\delta(1-3a)=1$, and $m \geq 3$, the Sphere Packing Bound shows that 
$d=6$. But the minimum distance is still open in other cases.  
\end{remark} 

\begin{Open} 
Determine the minimum distance $d$ for the code $\C_{s}$ of Theorem \ref{thm-3tDO2Gold}. 
\end{Open}

\section{Cyclic codes from $D_4(x,a)=x^4-4ax^2+2a^2$}\label{sec-4DPDOGold} 

In this section we investigate the code $\C_s$ defined by the Dickson polynomial 
$D_4(x,a)=x^4-4ax^2+2a^2$. We have to distinguish among the three cases: $p=2$, 
$p=3$ and $p \ge 5$. The case $p=2$ was covered in Section \ref{sec-tracex}. 
So we need to consider only the two remaining cases.  

We first take care of the cas $q=p=3$ and  prove the following lemma.  

\begin{lemma}\label{lem-34DPDOGold} 
Let $q=p=3$ and $m \ge 3$. 
Let $s^{\infty}$ be the sequence of (\ref{eqn-sequence}), where $f(x)=D_4(x,a)=x^4-4ax^2+2a^2$.  
Then the minimal polynomial $\m_s(x)$ of  $s^{\infty}$ is given by 
\begin{eqnarray*}
\m_s(x)=
 \left\{ 
\begin{array}{l} 
(x-1)^{\delta(1)} m_{\alpha^{-4}}(x)  m_{\alpha^{-1}}(x)  \mbox{ if } a =0, \\
(x-1)^{\delta(1)} m_{\alpha^{-4}}(x)  m_{\alpha^{-2}}(x)  \mbox{ if } a =1, \\
(x-1)^{\delta(1-a-a^2)} m_{\alpha^{-4}}(x)  m_{\alpha^{-2}}(x) m_{\alpha^{-1}}(x)  \mbox{ otherwise,}  
\end{array} 
\right. 
\end{eqnarray*} 
where $m_{\alpha^{-j}}(x)$ and the function $\delta(x)$ were defined in Section \ref{sec-notations}, 
and the linear span $\ls_s$ of $s^{\infty}$ is given by 
\begin{eqnarray*}\label{eqn-34DPDOGold}
\ls_s=\left\{ 
\begin{array}{ll} 
\delta(1) + 2m   & \mbox{if } a =0, \\
\delta(1) + 2m   & \mbox{if } a =1, \\
\delta(1-a-a^2) + 3m & \mbox{otherwise.}   
\end{array} 
\right. 
\end{eqnarray*} 
\end{lemma}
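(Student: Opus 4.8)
The plan is to follow the pattern of the proof of Lemma~\ref{lem-3tDPDOGold}: first expand $D_4(x+1,a)$ as an explicit polynomial in $x$ over $\gf(3^m)$, apply the trace termwise, and then read off $\m_s(x)$ and $\ls_s$ from Lemma~\ref{lem-ls2}. Working in characteristic $3$, where $(x+1)^4=x^4+x^3+x+1$, $-4a(x+1)^2=-ax^2+ax-a$, and $2a^2=-a^2$, I obtain
\begin{equation*}
D_4(x+1,a)=x^4+x^3-ax^2+(1+a)x+(1-a-a^2),
\end{equation*}
so that by (\ref{eqn-sequence}) and the $\gf(3)$-linearity of the trace,
\begin{equation*}
s_t=\tr(\alpha^{4t})+\tr(\alpha^{3t})-\tr(a\alpha^{2t})+\tr((1+a)\alpha^t)+\tr(1-a-a^2).
\end{equation*}
The single most useful observation is that the trace is invariant under the Frobenius map, so $\tr(\alpha^{3t})=\tr(\alpha^t)$; equivalently $3\in C_1$, i.e.\ $C_3=C_1$. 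Thus the cubic term creates no new cyclotomic coset but instead reinforces or cancels the linear term. Writing $\tr(y)=\sum_{j=0}^{m-1}y^{3^j}$ and expanding each trace yields the unique expansion $s_t=\sum_i c_i\alpha^{it}$ demanded by Lemma~\ref{lem-ls2}, whose support $I=\{i:c_i\neq0\}$ is contained in $\{0\}\cup C_1\cup C_2\cup C_4$.

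Next I would decide, case by case, which cosets lie in $I$ by evaluating $c_i$ at one representative of each coset; this suffices because $s_t\in\gf(3)$ forces $c_{3i}=c_i^3$, so $I$ is a union of cosets. The coefficient at $i=4$ comes only from the leading term and equals $1$, so $C_4\subseteq I$ always. The coefficient at $i=2$ is $-a$, so $C_2\subseteq I$ iff $a\neq0$. The coefficient at $i=1$ is $1+(1+a)=2+a$ (the $1$ from $\tr(\alpha^{3t})$, the $1+a$ from $\tr((1+a)\alpha^t)$), which vanishes exactly when $a=1$, so $C_1\subseteq I$ iff $a\neq1$. Finally $\tr(1-a-a^2)$ puts $0\in I$ iff $\delta(1-a-a^2)=1$. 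Assembling these (and noting $\delta(1-a-a^2)=\delta(1)$ when $a\in\{0,1\}$) reproduces the three displayed forms of $\m_s(x)$, via the standard identification $\prod_{i\in C_j}(1-\alpha^ix)=m_{\alpha^{-j}}(x)$ and $\prod_{i\in\{0\}}(1-\alpha^ix)=(x-1)$ up to the normalization of Lemma~\ref{lem-ls2}.

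The remaining, and main, technical point is the coset arithmetic needed to turn $\ls_s=|I|$ into the stated $2m+\delta(\cdot)$ or $3m+\delta(\cdot)$: I must show $\ell_1=\ell_2=\ell_4=m$ and that $C_1,C_2,C_4$ are pairwise disjoint. I would prove this by reading an exponent modulo $n=3^m-1$ through its $m$-digit base-$3$ expansion, under which multiplication by $3$ is a cyclic shift. The expansions of $1,2,4$ are $0\cdots001$, $0\cdots002$, $0\cdots011$; they have distinct nonzero-digit multisets, so no cyclic shift carries one into another, giving disjointness, and each has full cyclic period $m$, giving $\ell_1=\ell_2=\ell_4=m$. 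The period-$m$ claim for $4$ is exactly where $m\ge3$ is used: the two adjacent $1$'s can shorten the period only when $m=2$ (there $4=(11)_3$ has period $1$ and $\ell_4=1$), which the hypothesis excludes. This base-$3$ bookkeeping is elementary but is the step to be carried out with care; everything else follows by combining the coefficient computation with Lemma~\ref{lem-ls2}.
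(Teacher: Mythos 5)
Your proposal is correct and follows essentially the same route as the paper's proof: expand $D_4(x+1,a)=x^4+x^3-ax^2+(1+a)x+(1-a-a^2)$ in characteristic $3$, fold the cubic term into the linear one via $\tr(\alpha^{3t})=\tr(\alpha^t)$ (your coefficient $2+a$ is the paper's $a-1$), and read off $\m_s(x)$ and $\ls_s$ from Lemma~\ref{lem-ls2} together with $\ell_1=\ell_2=\ell_4=m$ and the pairwise disjointness of $C_1,C_2,C_4$. The only difference is that you actually justify the coset-size and disjointness claims (via the base-$3$ digit argument, correctly locating where $m\ge 3$ is needed), whereas the paper dismisses them as ``easily proved.''
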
 

\begin{proof} 
Note that 
$$ 
D_4(x+1, a)=x^4+x^3-ax^2+(1+a)x+1-a-a^2. 
$$
We have then 
$$ 
\tr(D_4(x+1, a))=\tr(x^4-ax^2+(a-1)x)+\tr(1-a-a^2). 
$$
By definition, 
\begin{eqnarray}\label{eqn-34DPDOGold2}
s_t = \tr((\alpha^t)^4 -a(\alpha^t)^2+(a-1)\alpha^t)+\tr(1-a-a^2).
\end{eqnarray}

It can be easily proved that $\ell_{1}=\ell_{n-1}=\ell_{4}=\ell_{n-4}=\ell_{2}=\ell_{n-2}=m$ 
and that the $3$-cyclotomic cosets $C_1$, $C_2$ and $C_4$ are pairwise disjoint. The desired 
conclusions on the linear span and the minimal polynomial $\m_s(x)$ then follow from 
Lemma \ref{lem-ls2} and (\ref{eqn-34DPDOGold2}). 
\end{proof}

The following theorem gives information on the code $\C_{s}$.    

\begin{theorem}\label{thm-34DO2Gold} 
Let $q=p=3$ and $m \ge 3$. 
Then the code $\C_{s}$ defined by the sequence of Lemma \ref{lem-34DPDOGold} has parameters 
$[n, n-\ls_s, d]$ and generator polynomial $\m_s(x)$, where $\m_s(x)$ and $\ls_s$ are 
given in Lemma \ref{lem-34DPDOGold}, and 
\begin{eqnarray*}
\left\{ 
\begin{array}{ll} 
d = 2   & \mbox{if } a =1,  \\
d = 3   & \mbox{if } a =0  \mbox{ $m \equiv 0 \pmod{6}$,}   \\
d \ge 4   & \mbox{if } a =0  \mbox{ $m \not\equiv 0 \pmod{6}$,}   \\
d \ge 5   & \mbox{if } a^2 \ne a \mbox{ and }  \delta(1-a-a^2)=0, \\
d = 6   & \mbox{if } a^2 \ne a \mbox{ and }  \delta(1-a-a^2) =1. 
\end{array} 
\right. 
\end{eqnarray*} 
\end{theorem}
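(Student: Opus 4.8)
The plan is to combine the BCH bound with two explicit low-weight codewords and, for the subtle $a=0$ cases, a direct elimination of weight-$3$ words. Since $\C_s$ and its reciprocal $\bar{\C}_s$ share the weight distribution, I would always bound $\bar{\C}_s$, whose zeros are the powers $\alpha^i$ with $i$ in the index set of Lemma \ref{lem-ls2}; by Lemma \ref{lem-34DPDOGold} this set is the appropriate union of the $3$-cyclotomic cosets $C_1,C_2,C_4$ together with $C_0=\{0\}$ exactly when the displayed $\delta$-exponent equals $1$. The dimension is immediate from the linear span computed in Lemma \ref{lem-34DPDOGold}, so only the minimum weight needs work.

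The two easy bounds come first. In the ``otherwise'' case ($a\notin\{0,1\}$) the defining set of $\bar{\C}_s$ contains $C_1\cup C_2\cup C_4$, hence the four consecutive integers $1,2,3,4$ (note $3=3\cdot 1\in C_1$), so the BCH bound gives $d\ge 5$; if moreover $\delta(1-a-a^2)\ne 0$ then $0$ joins them and the five consecutive zeros $0,1,2,3,4$ give $d\ge 6$. For $a=1$ the defining set contains the singleton $\{2\}$, so $d\ge 2$, and I would exhibit the word $1-x^{n/2}$, legitimate because $n=3^m-1$ is even: it vanishes at every even power of $\alpha$, hence on the whole defining set $C_2\cup C_4\,(\cup\{0\})$, which consists solely of even exponents. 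This proves $d=2$.

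The substance is $a=0$, where the defining set is only $C_1\cup C_4$ (with $\{0\}$ adjoined precisely when $3\nmid m$). Here $3,4$ are consecutive zeros, so $d\ge 3$ always, and the point is to analyze weight-$3$ words directly. Writing $x_j=\alpha^{i_j}$ and $c_j\in\gf(3)^*$, a weight-$3$ codeword of $\bar{\C}_s$ is exactly a solution of $\sum_j c_j x_j=0$ and $\sum_j c_j x_j^4=0$ in distinct nonzero $x_j$, subject to the extra relation $\sum_j c_j=0$ precisely when $3\nmid m$. Up to scaling and permutation only the coefficient patterns $(1,1,1)$ and $(1,1,-1)$ occur, and the $\delta$-constraint (present iff $3\nmid m$) admits only $(1,1,1)$. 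For $(1,1,1)$ I would put $x_3=-(x_1+x_2)$, set $t=x_1/x_2$, and use the characteristic-$3$ identity $t^4-t^3-t+1=(t-1)^4$ (which rests on $t^2+t+1=(t-1)^2$) to force $x_1=x_2$, a contradiction. For $(1,1,-1)$ I would put $x_3=x_1+x_2$ and reduce $x_3^4=x_1^4+x_2^4$ to $x_1^2+x_2^2=0$, i.e. $x_1/x_2=\sqrt{-1}$, which is solvable in $\gf(3^m)$ iff $m$ is even. Assembling: when $3\nmid m$ only the impossible pattern survives, and when $3\mid m$ with $m$ odd both patterns fail, so $d\ge 4$ for every $m\not\equiv 0\pmod 6$; when $6\mid m$ the element $i=\sqrt{-1}$ exists and the word with $(c_1,c_2,c_3)=(1,1,-1)$ and $x_1=i,\ x_2=1,\ x_3=1+i$ is a genuine weight-$3$ codeword (one checks $x_1^4+x_2^4-x_3^4=1+1+1=0$), giving $d=3$.

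The main obstacle is precisely this weight-$3$ analysis for $a=0$: the BCH bound saturates at $d\ge 3$, so everything rests on the two algebraic reductions above and on tracking how the parity of $m$ (existence of $\sqrt{-1}$) and the divisibility $3\mid m$ (presence of the $\delta(1)$ constraint) jointly decide which coefficient patterns are admissible. The characteristic-$3$ collapse of $t^4-t^3-t+1$ to $(t-1)^4$ is what makes the all-equal pattern uniformly impossible and is the crux of separating $d=3$ from $d\ge 4$.
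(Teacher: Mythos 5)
Your proposal is correct and takes essentially the same route as the paper's proof: the dimension comes from Lemma \ref{lem-34DPDOGold}, the BCH bound through the consecutive zeros $\alpha,\alpha^2,\alpha^3,\alpha^4$ (plus $\alpha^0$) handles $a\notin\{0,1\}$, your weight-two word $1-x^{n/2}$ for $a=1$ is the paper's codeword $2+x^{(3^m-1)/2}$ up to sign, and for $a=0$ both arguments eliminate weight-three words by the same characteristic-$3$ tools, namely the identity $t^4-t^3-t+1=(t-1)^4$, the fact that $-1$ is a square in $\gf(3^m)$ iff $m$ is even, and the extra constraint at $\alpha^0$ exactly when $3\nmid m$. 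The only (harmless) difference is organizational: by normalizing the coefficient triple up to a global sign you merge the paper's three mixed-sign subcases $(u_1,u_2)\in\{(1,-1),(-1,1),(-1,-1)\}$ into the single pattern $(1,1,-1)$ with the cleaner reduction $x_1^2+x_2^2=0$, whereas the paper treats $(-1,-1)$ separately via the auxiliary cubic $y^3+y^2+1$.
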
 

\begin{proof} 
The dimension of $\C_{s}$ follows from Lemma \ref{lem-34DPDOGold} and the definition of the 
code $\C_s$.  We now prove the conclusion on the minimum distance $d$ of $\C_{s}$. 

We consider the case $a=1$ first. In this case, the generator polynomial of this code $\C_{s}$ is 
$(x-1)^{\delta(1)} m_{\alpha^{-4}}(x)  m_{\alpha^{-2}}(x)$. It is easily seen that $1$, 
$\alpha^{-2}$ and $\alpha^{-4}$ are roots of $2+x^{(3^m-1)/2}=0$. Therefore, $\C_{s}$
has the codeword $2+x^{(3^m-1)/2}$ of Hamming weight 2. Hence $d=2$ when $a=1$.  

We now treat the case $a=0$.  In this case, the generator polynomial of this code is 
$\m_s(x)=(x-1)^{\delta(1)} m_{\alpha^{-4}}(x)  m_{\alpha^{-1}}(x)$. Note that 
$\bar{\m}_s(x)$ has the zeros $\alpha^3$ and $\alpha^4$. By the BCH bound the minimum weight $d$ 
in $\C_s$ is at least 3. We want to know when  $\C_s$ and $\bar{\C}_s$ have a 
codeword of weight 3. 

The code $\bar{\C}_s$ has a codeword of weight three if and only if 
there are two integers $t_1$ and $t_2$ with $1 \le t_1 \ne t_2 \le n-1$ and two elements 
$u_1$ and $u_2$ in $\{1, -1\}$ such that 
\begin{eqnarray}\label{eqn-14A1}
\left\{ \begin{array}{l} 
1 + u_1\alpha^{t_1} + u_2\alpha^{t_2}=0,  \\
1 + u_1\alpha^{4t_1} + u_2\alpha^{4t_2}=0. 
\end{array} 
\right. 
\end{eqnarray} 

Suppose now that $\bar{\C}_s$ has a codeword $1+u_1x^{t_1}+u_2x^{t_2}$ of 
weight 3.  
Combining the two equations of (\ref{eqn-14A1}) yields 
\begin{eqnarray}\label{eqn-14A2}
(u_1u_2+1) \alpha^{4t_2} + u_2\alpha^{3t_2} +  u_2\alpha^{t_2} + 1 + u_1=0 
\end{eqnarray}
and 
\begin{eqnarray}\label{eqn-14A3}
(u_1u_2+1) \alpha^{4t_1} + u_1\alpha^{3t_1} +  u_1\alpha^{t_1} + 1 + u_2=0. 
\end{eqnarray}

We now consider the first subcase that $u_1u_2=-1$ under the case that $a=0$. In 
this subcase, $\delta(1)=m \bmod{3} =0$ as $1+u_1+u_2=1 \ne 0$. In this subcase 
(\ref{eqn-14A2}) and  (\ref{eqn-14A3}) become 
\begin{eqnarray}\label{eqn-14A4}
\alpha^{3t_2} +  \alpha^{t_2} - u_1(1 + u_1)=0 
\end{eqnarray}
and 
\begin{eqnarray}\label{eqn-14A5}
\alpha^{3t_1} +  \alpha^{t_1} - u_2(1 + u_2)=0.  
\end{eqnarray}  
Due to symmetry, we assume that $(u_1, u_2)=(-1, 1)$. It follows from (\ref{eqn-14A4}) 
and (\ref{eqn-14A5}) that  
$$ 
\alpha^{2t_2}=-1 \mbox{ and } (\alpha^{t_1}-1)^2=-1. 
$$
When $m$ is odd, $\alpha^{(3^m-1)/2}=-1$ and $(3^m-1)/2$ is odd. Hence, $-1$ cannot 
be a square in $\gf(r)$. Therefore,  $\bar{\C}_s$ cannot have a codeword $1+u_1x^{t_1}+u_2x^{t_2}$  
when $a=0$ and $m$ is odd, where $u_1u_2=-1$. 
When $m$ is even, $m \equiv 0 \pmod{6}$ and $-1$ is a square in $\gf(r)$. Let $y_1 \in \gf(r)$ 
be a solution of $y^2=-1$, and define $t_2$ and $t_1$ such that 
$$ 
\alpha^{t_2}=y_1, \ \alpha^{t_1}=1+y_1. 
$$   
Then $t_1$ and $t_2$ are distinct and $1+x^{t_1}-x^{t_2}$ is indeed a codeword of weight 
three in $\bar{\C}_s$. Thus, $d=3$ when $m \equiv 0 \pmod{6}$.

We are ready to consider the second subcase that $u_1u_2=1$ under the case that $a=0$.  
In this subcase 
(\ref{eqn-14A2}) and  (\ref{eqn-14A3}) become 
\begin{eqnarray}\label{eqn-14A6}
\alpha^{4t_2}  -u_2\alpha^{3t_2} - u_2 \alpha^{t_2} - (1 + u_1)=0 
\end{eqnarray}
and 
\begin{eqnarray}\label{eqn-14A7}
\alpha^{4t_1}  -u_1\alpha^{3t_1} - u_1 \alpha^{t_1} - (1 + u_2)=0   
\end{eqnarray}  
When $(u_1, u_2)=(1, 1)$. It follows from (\ref{eqn-14A6}) 
and (\ref{eqn-14A7}) that  
$$ 
(\alpha^{t_2}-1)^4=0 \mbox{ and } (\alpha^{t_1}-1)^4=0. 
$$
Hence $\alpha^{t_2}=\alpha^{t_1}=1$. This is impossible as $\alpha$ is a generator of $\gf(r)$. 
Therefore,  $\bar{\C}_s$ cannot have a codeword $1+x^{t_1}+x^{t_2}$. 
When $(u_1, u_2)=(-1, -1)$. It follows from (\ref{eqn-14A6}) 
and (\ref{eqn-14A7}) that  
\begin{eqnarray*} 
&& \alpha^{t_2}(\alpha^{3t_2} + \alpha^{2t_2}+1)=0,  \\
&& \alpha^{t_1}(\alpha^{3t_1} + \alpha^{2t_1}+1)=0.
\end{eqnarray*} 
Note that $y^3+y^2+1=0$ if and only if 
$$ 
(y^{-1}-1)^3 +(y^{-1}-1)=0. 
$$
However, $z^3+z=0$ does not have a nonzero solution $z$ in $\gf(r)$ if $m$ is odd. 
This proves that the code $\bar{\C}_s$ cannot have a codeword $1-x^{t_1}-x^{t_2}$ 
when $m$ is odd.  
If $m$ is even, $m \equiv 0 \pmod{6}$ as $1+u_1+u_2 =1 \ne 0$. When $m \equiv 0 \pmod{6}$, 
let $z_1 \in \gf(r)$ and $z_2 \in \gf(r)$ be the two distinct solutions of $z^2=-1$. Define $t_1$ 
and $t_2$ so that 
$$ 
\alpha^{t_i}=\frac{1}{1+z_i}
$$  
for $i \in \{1,2\}$. Then $1-x^{t_1}-x^{t_2}$ is a codeword of weight three in $\bar{\C}_s$. 
This completes the proof of the conclusions on the minimum weight $d$ for the case $a=0$. 

When $a(a-1) \ne 0$, $\bar{\m}_s(x)$ has the zeros $\alpha^i$ for all $i \in \{1,2,3,4\}$ and 
the additional zero $\alpha^0$ if $\delta(1-a-a^2) =1$. The last two lower bounds on $d$ 
then follow from the BCH bound. When $a^2 \ne a$ and $\delta(1-a-a^2)=1$, the Sphere Packing 
Bound proves that $d \le 6$. We have thus $d=6$ in this case.  
\end{proof} 

\begin{remark} 
When $a=1$, the code of Theorem \ref{thm-34DO2Gold} is neither optimal nor almost optimal.  
The code is either optimal or almost optimal in all other cases. 
\end{remark} 

Now we consider the case $q=p^t$, where $p \ge 5$ or $p=3$ and $t \ge 2$. 

\begin{lemma}\label{lem-4tDPDOGold} 
Let $m \ge 2$ and $q=p^t$, where $p \ge 5$ or $p=3$ and $t\ge 2$. 
Let $s^{\infty}$ be the sequence of (\ref{eqn-sequence}), where $f(x)=D_4(x,a)=x^4-4ax^2+2a^2$.  
Then the minimal polynomial $\m_s(x)$ of  $s^{\infty}$ is given by 
\begin{eqnarray*}
\m_s(x)=\left\{ 
\begin{array}{l} 
(x-1)^{\delta(1)} m_{\alpha^{-4}}(x)  m_{\alpha^{-3}}(x)  m_{\alpha^{-1}}(x)  \mbox{ if } a =\frac{3}{2}, \\
(x-1)^{\delta(1)} m_{\alpha^{-4}}(x)  m_{\alpha^{-3}}(x)  m_{\alpha^{-2}}(x)  \mbox{ if } a =\frac{1}{2}, \\
(x-1)^{\delta(1-4a+2a^2)} \prod_{i=1}^4 m_{\alpha^{-i}}(x)  \mbox{ if } a \not\in \{\frac{3}{2}, \frac{1}{2}\},   
\end{array} 
\right. 
\end{eqnarray*} 
where $m_{\alpha^{-j}}(x)$ and the function $\delta(x)$ were defined in Section \ref{sec-notations}, 
and the linear span $\ls_s$ of $s^{\infty}$ is given by 
\begin{eqnarray*}
\ls_s=\left\{ 
\begin{array}{ll} 
\delta(1) + 3m  & \mbox{if } a \in \{\frac{3}{2}, \frac{1}{2}\}, \\
\delta(1-4a+2a^2) + 4m  & \mbox{otherwise.}  
\end{array} 
\right. 
\end{eqnarray*} 
\end{lemma}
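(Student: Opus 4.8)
The plan is to follow the template of the earlier lemmas (e.g.\ Lemma \ref{lem-34DPDOGold}): translate the Dickson polynomial by $1$, read off the exponents of $\alpha^t$ occurring in the trace expansion of the defining sequence, and then apply Lemma \ref{lem-ls2}. A direct expansion gives
\[
D_4(x+1,a)=x^4+4x^3+(6-4a)x^2+(4-8a)x+(1-4a+2a^2),
\]
so that, by additivity of the trace,
\[
s_t=\tr\!\left((\alpha^t)^4+4(\alpha^t)^3+(6-4a)(\alpha^t)^2+(4-8a)\alpha^t\right)+\tr(1-4a+2a^2).
\]
Hence the only exponents of $\alpha^t$ that can occur lie in $C_0\cup C_1\cup C_2\cup C_3\cup C_4$, where $C_0=\{0\}$.

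Next I would record the relevant cyclotomic-coset data. Because $q\ge 5$ in both admissible cases ($p\ge5$, or $p=3$ with $t\ge2$ so that $q=3^t\ge9$), and since $m\ge2$, one checks as in the earlier lemmas that $C_1,C_2,C_3,C_4$ each have size $m$, are pairwise disjoint, and are disjoint from $C_0$; the hypothesis $q\ge5$ is exactly what prevents two of the small exponents $1,2,3,4$ from sharing a coset. Consequently, for $k\in\{1,2,3,4\}$ the whole coset $C_k$ contributes to the index set $I$ of Lemma \ref{lem-ls2} precisely when the coefficient $c_k$ of $x^k$ in $D_4(x+1,a)$ is nonzero: disjointness together with full size $m$ forces the Antweiler coefficient attached to any $i\in C_k$ to be a single Frobenius power of $c_k$, so no cancellation is possible. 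Likewise $0\in I$ iff $\tr(1-4a+2a^2)\neq0$, i.e.\ iff $\delta(1-4a+2a^2)=1$, which supplies the exponent of $(x-1)$.

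It then remains to decide which coefficients vanish. Since $p\neq2$, the leading coefficient $1$ and the coefficient $4$ of $x^3$ are always nonzero, so $m_{\alpha^{-4}}(x)$ and $m_{\alpha^{-3}}(x)$ always divide $\m_s(x)$. The coefficient $6-4a$ vanishes exactly when $a=\tfrac32$, and $4-8a$ vanishes exactly when $a=\tfrac12$; as $\tfrac32=\tfrac12$ would force $p=2$, at most one of these occurs. This yields the three listed cases, with $C_2$ dropped when $a=\tfrac32$ and $C_1$ dropped when $a=\tfrac12$. Finally, substituting either special value gives $1-4a+2a^2=-\tfrac12\in\gf(q)^*$, whence $\tr(-\tfrac12)=-\tfrac12\,\tr(1)$ and therefore $\delta(1-4a+2a^2)=\delta(1)$, matching the stated exponent of $(x-1)$ in those two cases. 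The linear span $\ls_s=|I|$ is then $\delta(1-4a+2a^2)+4m$ generically and $\delta(1)+3m$ in the two special cases.

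The main obstacle is the equivalence highlighted in the second paragraph: that membership of a full coset $C_k$ in $I$ is equivalent to the single scalar $c_k$ being nonzero, with no partial or canceling contributions across monomials. This is precisely where $q>4$ is essential—for smaller $q$ (as in Lemma \ref{lem-34DPDOGold}, where $q=3$ merges $C_3$ into $C_1$) two distinct monomials can feed the same coset, and their coefficients must be combined before testing for zero. Verifying the coset sizes and pairwise disjointness for $q\ge5$ is routine, but it is the step that underpins the entire argument.
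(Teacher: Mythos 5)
Your proposal is correct and follows essentially the same route as the paper's proof: expand $D_4(x+1,a)$, write $s_t$ as a trace of a combination of $(\alpha^t)^i$ for $i\in\{1,2,3,4\}$ plus a constant, verify that $C_1,C_2,C_3,C_4$ have full size $m$ and are pairwise disjoint, and apply Lemma \ref{lem-ls2}. In fact you supply details the paper leaves implicit (why $q\ge 5$ forces the coset facts, the case analysis on the vanishing of $6-4a$ and $4-8a$, and the check that $1-4a+2a^2=-\tfrac12$ gives $\delta(1-4a+2a^2)=\delta(1)$ in the two special cases), all of which are accurate.
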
 

\begin{proof} 
Note that 
$$ 
D_4(x+1, a)=x^4 + 4x^3+(6-4a)x^2+(4-8a)x+1-4a+2a^2. 
$$
We have then 
\begin{eqnarray}\label{eqn-4tDPDOGold2}
s_t = \tr( (\alpha^t)^4 + 4(\alpha^t)^3+(6-4a)(\alpha^t)^2+(4-8a)\alpha^t) + \tr(1-4a+2a^2)
\end{eqnarray}
for all $t \ge 0$. 

Since $m \ge 2$ and $q=p^t$, where $p \ge 5$ or $p=3$ and $t\ge 2$, one can prove 
that 
$$
\ell_{1}=\ell_{n-1}=\ell_{3}=\ell_{n-3}=\ell_2=\ell_{n-2}=\ell_4=\ell_{n-4}=m
$$ 
and that the $q$-cyclotomic cosets 
$  
C_1, C_{2}, C_3, C_4   
$ 
are pairwise disjoint.  
The desired conclusions on the linear span and the minimal polynomial $\m_s(x)$ then follow from Lemma \ref{lem-ls2} 
and (\ref{eqn-4tDPDOGold2}). 
\end{proof}

The following theorem delivers to us information on the code $\C_{s}$.    

\begin{theorem}\label{thm-4tDO2Gold} 
Let $m \ge 2$ and $q=p^t$, where $p \ge 5$ or $p=3$ and $t\ge 2$. 
Then the code $\C_{s}$ defined by the sequence of Lemma \ref{lem-4tDPDOGold} has parameters 
$[n, n-\ls_s, d]$ and generator polynomial $\m_s(x)$, where $\m_s(x)$ and $\ls_s$ are 
given in Lemma \ref{lem-4tDPDOGold}, and 
\begin{eqnarray*}
\left\{ 
\begin{array}{ll} 
d \ge 3   & \mbox{if } a =\frac{3}{2}, \\
d \ge 4   & \mbox{if } a =\frac{1}{2}, \\
d \ge 5   & \mbox{if } a \not\in \{\frac{3}{2}, \frac{1}{2}\}  \mbox{ and }  \delta(1-4a+a^2)=0, \\
d = 6   & \mbox{if } a \not\in \{\frac{3}{2}, \frac{1}{2}\}  \mbox{ and }  \delta(1-4a+a^2)=1. 
\end{array} 
\right. 
\end{eqnarray*} 
\end{theorem}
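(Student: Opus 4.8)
The plan is to follow exactly the template set by the proofs of Theorems \ref{thm-2DO2Gold} and \ref{thm-3tDO2Gold}: the dimension $n-\ls_s$ is immediate from Lemma \ref{lem-4tDPDOGold} together with the definition of $\C_s$, so the entire content lies in establishing the four lower bounds on the minimum distance $d$. Since $\C_s$ and $\bar{\C}_s$ share the same weight distribution, I would work with $\bar{\C}_s$, whose generator polynomial $\bar{\m}_s(x)$ is the reciprocal of $\m_s(x)$; concretely, each root $\alpha^{-i}$ of $\m_s(x)$ contributes the root $\alpha^{i}$ to $\bar{\m}_s(x)$. Thus from Lemma \ref{lem-4tDPDOGold} I would read off exactly which powers $\alpha^{i}$ are zeros of $\bar{\C}_s$ in each case, locate the longest block of consecutive exponents, and invoke the BCH bound (a run of $\delta-1$ consecutive zeros forces $d\ge\delta$).

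Carrying this out case by case: when $a=\tfrac{3}{2}$, the factors $m_{\alpha^{-1}}(x)$, $m_{\alpha^{-3}}(x)$, $m_{\alpha^{-4}}(x)$ give $\bar{\m}_s(x)$ the zeros $\alpha^{1},\alpha^{3},\alpha^{4}$ (with their conjugates), while $\alpha^{2}$ is absent; the run $\{\alpha^{3},\alpha^{4}\}$ of two consecutive zeros yields $d\ge 3$. When $a=\tfrac{1}{2}$, the zeros $\alpha^{2},\alpha^{3},\alpha^{4}$ form a run of three consecutive powers, so the BCH bound gives $d\ge 4$. When $a\notin\{\tfrac{3}{2},\tfrac{1}{2}\}$, the product $\prod_{i=1}^{4}m_{\alpha^{-i}}(x)$ makes $\alpha^{1},\alpha^{2},\alpha^{3},\alpha^{4}$ all zeros of $\bar{\m}_s(x)$, a run of four, whence $d\ge 5$; if in addition the relevant $\delta$-value equals $1$, the factor $(x-1)$ supplies the further zero $\alpha^{0}$, extending the run to $\alpha^{0},\alpha^{1},\alpha^{2},\alpha^{3},\alpha^{4}$ and giving $d\ge 6$.

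Unlike the proof of Theorem \ref{thm-34DO2Gold}, no delicate analysis of low-weight codewords is needed here, since only lower bounds are asserted and in every case a block of consecutive zeros of the advertised length is manifestly present. The only point demanding care is the $a=\tfrac{3}{2}$ case, where the missing exponent $2$ breaks the otherwise contiguous range $\{1,2,3,4\}$ and so caps the guaranteed run at length two; I would verify that no longer run is gained even when the $(x-1)$ factor contributes $\alpha^{0}$, since $\{0,1\}$ and $\{3,4\}$ remain separated by the gap at $2$. I would also flag the apparent typographical slip in the hypothesis $\delta(1-4a+a^2)$ of the statement, which should read $\delta(1-4a+2a^2)$ to agree with the constant term of $D_4(x+1,a)$ computed in Lemma \ref{lem-4tDPDOGold}; the bound itself is unaffected.
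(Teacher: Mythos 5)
Your proposal is correct and takes essentially the same approach as the paper: the paper's own proof merely states that the dimension follows from Lemma \ref{lem-4tDPDOGold} and that the lower bounds on $d$ ``follow from the BCH bounds'' with details left to the reader, and your case-by-case identification of the consecutive runs of zeros of $\bar{\m}_s(x)$ (including the check that the gap at exponent $2$ caps the run at length two when $a=\tfrac{3}{2}$) supplies exactly those omitted details. Your flag of the typographical slip is also apt: the theorem's $\delta(1-4a+a^2)$ should indeed read $\delta(1-4a+2a^2)$, matching the constant term of $D_4(x+1,a)$ computed in the lemma.
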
 

\begin{proof} 
The dimension of $\C_{s}$ follows from Lemma \ref{lem-4tDPDOGold} and the definition of the 
code $\C_s$.  The lower bounds on the minimum weight $d$ of $\C_{s}$ follow from the BCH 
bounds and the Sphere Packing Bound. The details are left to the reader. 
\end{proof} 

\begin{remark} 
Except the cases that $a \in \{\frac{3}{2}, \frac{1}{2}\}$, the code $\C_{s}$ of Theorem 
\ref{thm-4tDO2Gold} is either optimal or almost optimal. 
\end{remark}

\section{Cyclic codes from $D_5(x,a)=x^5-5ax^3+5a^2x$}\label{sec-5DOGold} 

In this section we deal with the code $\C_s$ defined by the Dickson polynomial 
$D_5(x,a)=x^5-5ax^3+5a^2x$. We have to distinguish among the three cases: $p=2$, 
$p=3$ and $p \ge 7$. The case $p=5$ was covered in Section \ref{sec-tracex}. 
So we need to consider only the remaining cases.  

We first establish the following lemma. 

\begin{lemma} \label{lem-124}
The equation $x+x^2+x^4=0$ has a nonzero solution $x \in \gf(2^m)$ if and only if 
$m \equiv 0 \pmod{3}$. 
\end{lemma}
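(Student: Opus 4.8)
The plan is to reduce the quartic equation to a cubic, recognize that cubic as an irreducible polynomial over $\gf(2)$, and then finish with the standard subfield-containment criterion for finite fields.

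First I would exploit the characteristic-$2$ setting to factor. Since $x+x^2+x^4 = x(1+x+x^3)$, the trivial root $x=0$ splits off, and the nonzero solutions of $x+x^2+x^4=0$ in $\gf(2^m)$ are exactly the roots of $g(x)=x^3+x+1$ that happen to lie in $\gf(2^m)$. Thus the whole question is reduced to: when does $g$ have a root in $\gf(2^m)$?

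Next I would establish that $g(x)=x^3+x+1$ is irreducible over $\gf(2)$. It has degree $3$, so it is reducible over $\gf(2)$ if and only if it has a root in $\gf(2)$; but $g(0)=1$ and $g(1)=1+1+1=1$, so neither element of $\gf(2)$ is a root, and hence $g$ is irreducible. It follows that each root $\beta$ of $g$ generates the degree-$3$ extension $\gf(2^3)$ over $\gf(2)$, so all roots of $g$ lie in $\gf(2^3)$ and in no intermediate field strictly between $\gf(2)$ and $\gf(2^3)$.

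Finally I would invoke the criterion that $\gf(2^3)\subseteq \gf(2^m)$ if and only if $3\mid m$. Since the roots of $g$ lie precisely in $\gf(2^3)$, they belong to $\gf(2^m)$ exactly when $\gf(2^3)$ is a subfield of $\gf(2^m)$, that is, when $m\equiv 0\pmod 3$; conversely, if $3\nmid m$ then $\gf(2^m)$ contains no root of the irreducible $g$. Combined with the factorization of the first step, this yields the stated equivalence. The argument is entirely routine: the only steps demanding any care are the correct factorization in characteristic $2$ and the verification that the resulting cubic is irreducible, after which the subfield criterion is the genuine content and does all the work.
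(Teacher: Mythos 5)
Your proof is correct, but it takes a genuinely different route from the paper's. You factor $x+x^2+x^4=x(x^3+x+1)$, verify that $x^3+x+1$ is irreducible over $\gf(2)$ (no roots in $\gf(2)$, degree $3$), and then let the subfield criterion $\gf(2^3)\subseteq\gf(2^m)\iff 3\mid m$ settle both implications at once. The paper instead treats the two directions separately: for necessity it squares the equation and adds, obtaining $x+x^8=0$, hence $x^7=1$ with $x\neq 1$, and concludes from $\gcd(7,2^m-1)=2^{\gcd(3,m)}-1$ that $3\mid m$; for sufficiency it is constructive, defining the partial trace $\pi(y)=\sum_{i} y^{2^{3i}}$ (summing over a set of representatives of the cosets of $3\Z$ in $\Z_m$) and choosing $y$ with $\tr(y)=0$ but $\pi(y)\neq 0$, so that $x=\pi(y)$ is a nonzero solution since $\pi(y)+\pi(y)^2+\pi(y)^4=\tr(y)$. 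Your argument is shorter and more structural---the subfield criterion does all the work and no explicit solution is needed---whereas the paper's sufficiency step actually exhibits a nonzero solution when $3\mid m$. The two necessity arguments are secretly the same fact: $x^3+x+1$ divides $x^7-1$ over $\gf(2)$, so the roots of your cubic are precisely primitive seventh roots of unity, which is what the paper's squaring manipulation uncovers.
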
 

\begin{proof} 
Suppose that $x+x^2+x^4=0$ for some $x \in \gf(2^m)^*$. 
Then $(x+x^2+x^4)^2=x^2+x^4+x^8=0$. Combining the two 
equations yields $x+x^8=0$. Hence $x^7=1$. Since $x \ne 1$, 
this means that $\gcd(7, 2^m)=2^{\gcd(3,m)}-1=7$. Hence  
$m \equiv 0 \pmod{3}$. 

Suppose now that $m \equiv 0 \pmod{3}$. Let $m'=m/3$. Define 
$$ 
\pi(y) = \sum_{i=0}^{m'} y^{2^{3i}} 
$$
for any $y \in \gf(2^m)$. It is well known that $\tr(y)=0$ has $2^{m-1}$ 
solutions $y \in \gf(2^m)$. One of them must satisfy that $\pi(y) \ne 0$ 
as the two functions $\pi(x)$ and $\tr(x)$ are clearly different. Let $y \in \gf(2^m)$ 
such that $\tr(y)=0$ and $\pi(y) \ne 0$. Then it is easily seen that 
$\pi(y) + \pi(y)^2 + \pi(y)^4 =\tr(y)=0.$ This completes the proof.    
\end{proof} 

We first consider the cas $q=p=2$ and  prove the following lemma.  

\begin{lemma}\label{lem-52DPDOGold} 
Let $q=p=2$ and $m \ge 5$. 
Let $s^{\infty}$ be the sequence of (\ref{eqn-sequence}), where $f(x)=D_5(x,a)=x^5-5ax^3+5a^2x$.  
Then the minimal polynomial $\m_s(x)$ of  $s^{\infty}$ is given by 
\begin{eqnarray*}
\m_s(x)=  
 \left\{ 
\begin{array}{l} 
(x-1)^{\delta(1)} m_{\alpha^{-5}}(x)  \mbox{ if } a =0, \\
(x-1)^{\delta(1)} m_{\alpha^{-5}}(x)  m_{\alpha^{-3}}(x)  \mbox{ if } 1+a+a^3=0, \\
(x-1)^{\delta(1)} \prod_{i=0}^2 m_{\alpha^{-(2i+1)}}(x)   \mbox{ if } a+a^2+a^4 \ne 0 
\end{array} 
\right. 
\end{eqnarray*} 
where $m_{\alpha^{-j}}(x)$ and the function $\delta(x)$ were defined in Section \ref{sec-notations}, 
and the linear span $\ls_s$ of $s^{\infty}$ is given by 
\begin{eqnarray*}
\ls_s=\left\{ 
\begin{array}{ll} 
\delta(1) +m    &   \mbox{ if } a =0, \\
\delta(1) +2m  &   \mbox{ if } 1+a+a^3=0, \\
\delta(1) +3m  &   \mbox{ if } a+a^2+a^4 \ne 0.  
\end{array} 
\right. 
\end{eqnarray*} 
\end{lemma}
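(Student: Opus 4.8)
The plan is to mirror the computation in Lemma~\ref{lem-2DPDOGold}: substitute $x \mapsto \alpha^t+1$ into $D_5(x,a)$, expand $s_t = \tr(D_5(\alpha^t+1,a))$ as a $\gf(2^m)$-linear combination of the monomials $\alpha^{it}$, and recover the support set $I$ of Lemma~\ref{lem-ls2}. First I would reduce $D_5(x,a)$ in characteristic $2$. Since $q=p=2$ we have $-5a = a$ and $5a^2 = a^2$, so $D_5(x,a) = x^5 + ax^3 + a^2 x$. Using $(x+1)^5 = x^5+x^4+x+1$ and $(x+1)^3 = x^3+x^2+x+1$ over $\gf(2^m)$, I would then obtain
\[
D_5(x+1,a) = x^5 + x^4 + ax^3 + ax^2 + (1+a+a^2)x + (1+a+a^2),
\]
and hence, by the $\gf(2)$-linearity of the trace,
\[
s_t = \tr(\alpha^{5t}) + \tr(\alpha^{4t}) + \tr(a\alpha^{3t}) + \tr(a\alpha^{2t}) + \tr((1+a+a^2)\alpha^t) + \tr(1+a+a^2).
\]

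Next I would read off the support coset by coset, using the identity $\tr(\gamma y^{2^s}) = \tr(\gamma^{2^{m-s}} y)$. The constant term is $\tr(1+a+a^2) = \tr(1) = \delta(1)$, which contributes the factor $(x-1)^{\delta(1)}$. The exponents $1,2,4$ all lie in the single coset $C_1$, so folding the $\alpha^{4t}$ and $\alpha^{2t}$ terms down to $\alpha^t$ replaces their three contributions by $\tr(c_1^\ast \alpha^t)$ with $c_1^\ast = (1+a+a^2) + a^{2^{m-1}} + 1 = a + a^2 + a^{2^{m-1}}$; the crucial simplification is $(c_1^\ast)^2 = a + a^2 + a^4$, so that $C_1$ lies in $I$ precisely when $a+a^2+a^4 \neq 0$. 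The coefficient attached to $C_3$ is $a$ (nonzero iff $a \neq 0$) and that attached to $C_5$ is $1$ (always nonzero).

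Writing $a + a^2 + a^4 = a(1+a+a^3)$, the three cases $a=0$, $1+a+a^3=0$ (which forces $a\neq0$), and $a+a^2+a^4\neq0$ partition the possibilities and yield the supports $\{C_5\}$, $\{C_3,C_5\}$, and $\{C_1,C_3,C_5\}$ respectively, each augmented by $\{0\}$ when $\delta(1)=1$. Identifying $\prod_{i\in C_k}(1-\alpha^i x)$ with $m_{\alpha^{-k}}(x)$ as usual, Lemma~\ref{lem-ls2} then delivers the stated $\m_s(x)$ and $\ls_s$. It is worth noting that the emergence of $a+a^2+a^4$ here is exactly the quantity that Lemma~\ref{lem-124} is prepared to analyze.

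The one step standing apart from this formal calculation, which I expect to be the main obstacle, is the combinatorial claim that for $m \ge 5$ one has $\ell_1 = \ell_3 = \ell_5 = m$ and that $C_1, C_3, C_5$ are pairwise disjoint. This underlies both the identification of the three irreducible factors and the value of $\ls_s$ (each coset contributing a full $m$), and the hypothesis $m \ge 5$ is precisely what excludes the small-$m$ degeneracies---such as $\ell_5 = 2$ when $m=4$---that would otherwise merge cosets and shorten the linear span.
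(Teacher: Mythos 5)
Your proposal is correct and follows essentially the same route as the paper's own proof: substitute $x\mapsto x+1$ into $D_5(x,a)=x^5+ax^3+a^2x$, fold the $x^4$ and $ax^2$ terms into the coset $C_1$ via Frobenius (the paper records the folded coefficient $a^{2^{m-1}}+a+a^2$, whose square is your $a+a^2+a^4$), and apply Lemma~\ref{lem-ls2} together with the fact that $\ell_1=\ell_3=\ell_5=m$ and $C_1,C_3,C_5$ are pairwise disjoint for $m\ge 5$. You actually supply slightly more detail than the paper---the explicit squaring trick, the factorization $a+a^2+a^4=a(1+a+a^3)$ showing the three cases partition, and the $m=4$ degeneracy $\ell_5=2$ explaining the hypothesis $m\ge5$---all of which is consistent with the paper's terser argument.
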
 

\begin{proof} 
Note that 
$$ 
D_5(x+1, a)=x^5+x^4+ax^3+ax^2+(1+a+a^2)x+1+a+a^2. 
$$
Since $q=2$, we have then 
$$ 
\tr(D_5(x+1, a))=\tr\left(x^5+ax^3+(a^{2^{m-1}}+a+a^2)x\right)+\tr(1). 
$$
By definition, 
\begin{eqnarray}\label{eqn-52DPDOGold2}
s_t = \tr\left((\alpha^t)^5+a(\alpha^t)^3+(a^{2^{m-1}}+a+a^2)(\alpha^t)\right)+\tr(1). 
\end{eqnarray}

It can be easily proved that $\ell_{1}=\ell_{3}=\ell_{5}=m$ and that $C_1$, $C_{3}$ and 
$C_5$ are pairwise disjoint when $m \ge 5$. 
The desired conclusions on the linear span and the minimal polynomial $\m_s(x)$ then follow from Lemma \ref{lem-ls2} 
and (\ref{eqn-52DPDOGold2}). 
\end{proof}

The following theorem describes parameters of the code $\C_{s}$.    

\begin{theorem}\label{thm-52DO2Gold} 
Let $q=p=2$ and $m \ge 5$. 
Then the code $\C_{s}$ defined by the sequence of Lemma \ref{lem-52DPDOGold} has parameters 
$[n, n-\ls_s, d]$ and generator polynomial $\m_s(x)$, where $\m_s(x)$ and $\ls_s$ are 
given in Lemma \ref{lem-52DPDOGold}, and 
\begin{eqnarray*}
\left\{ 
\begin{array}{ll} 
d = 2      & \mbox{if } a =0 \mbox{ and }  \delta(1)=0 \mbox{ and } \gcd(5, n)=5, \\
d = 3      & \mbox{if } a =0 \mbox{ and }  \delta(1)=0 \mbox{ and } \gcd(5, n)=1, \\
d = 4      & \mbox{if } a =0 \mbox{ and }  \delta(1)=1, \\
d \ge 3   & \mbox{if } 1+a+a^3=0 \mbox{ and }  \delta(1)=0, \\
d \ge 4   & \mbox{if } 1+a+a^3=0 \mbox{ and }  \delta(1)=1, \\
d \ge 7   & \mbox{if } a+a^2+a^4 \ne 0 \mbox{ and }  \delta(1)=0, \\
d = 8   & \mbox{if } a+a^2+a^4 \ne 0 \mbox{ and }  \delta(1)=1. 
\end{array} 
\right. 
\end{eqnarray*} 
\end{theorem} 

\begin{proof} 
The dimension of $\C_{s}$ follows from Lemma \ref{lem-52DPDOGold} and the definition of the 
code $\C_s$.  We need to prove the conclusion on the minimum distance $d$ of $\C_{s}$. 

We consider the case $a=0$ first. Since $\alpha^5 \ne 0$, $d \ge 2$. On the other hand, 
if $\delta(1)=0$ and $\gcd(5, n)=5$, then $m$ is even and $(\alpha^{5})^{(2^m-1)/5}=1$. 
Hence $\C_s$ has the codeword $1+x^{(2^m-1)/5}$ of Hamming weight 2. Whence, $d=2$. 
If $\delta(1)=0$ and $\gcd(5, n)=1$, then $\alpha^5$ is a primitive element, the code $\C_{s}$ 
is equivalent to the Hamming code. Hence $d=3$. 
If $\delta(1)=1$, then $m$ is odd and 
$\gcd(5, 2^m-1)=1$. Hence, $\alpha^5$ is a primitive element of $\gf(2^m)$ and the code 
$\tilde{\C}_s$ generated by $\m_{\alpha^{-5}}(x)$ has minimum weight 3. Hence the even-like  
subcode $\C_s$ of  $\tilde{\C}_s$ has minimum weight 4. 

We now consider the case that $1+a+a^3=0$. By Lemma \ref{lem-124}, $m \equiv 0 \pmod{3}$. 
In this case $\m_s(x)=(x-1)^{\delta(1)} m_{\alpha^{-5}}(x)  m_{\alpha^{-3}}(x)$.  
Since $\bar{\m}_s(x)$ has the zeros $\alpha^5$ and $\alpha^6$, $d\ge 3$. If $\delta(1)=1$, 
$\C_s$ is an even-weight code. Hence $d \ge 4$.  

We finally consider the case that $1+a+a^3 \ne 0$. Note that $\bar{\m}_s(x)$ has zeros  
$\alpha^i$ for all $i \in \{1,2,3,4,5,6\}$, and the additional zero $\alpha^0$ when $\delta(1)=1$. 
The conclusions on the minimum weight $d$ in this case follow from the BCH bound.  
When $m \geq 5$ is odd and $ a+a^2+a^4 \ne 0$, the Sphere Packing Bound tells us that $d \leq 8$. 
We have then $d=8$ in the last case.  
\end{proof} 

\begin{remark} 
The code of Theorem \ref{thm-52DO2Gold} is either optimal or almost optimal. 
The code is not a BCH code when $1+a+a^3=0$, and a BCH code in the remaining cases. 
\end{remark}

We now consider the cas $(p, q)=(2, 4)$ and  prove the following lemma.  

\begin{lemma}\label{lem-452DPDOGold} 
Let $(p, q)=(2, 4)$ and $m \ge 3$. 
Let $s^{\infty}$ be the sequence of (\ref{eqn-sequence}), where $f(x)=D_5(x,a)=x^5-5ax^3+5a^2x$.  
Then the minimal polynomial $\m_s(x)$ of  $s^{\infty}$ is given by 
\begin{eqnarray*}
\m_s(x)=   
 \left\{ 
\begin{array}{ll} 
(x-1)^{\delta(1)} m_{\alpha^{-5}}(x)  & \mbox{ if } a =0, \\
(x-1)^{\delta(1)} m_{\alpha^{-5}}(x) m_{\alpha^{-3}}(x) m_{\alpha^{-2}}(x) & \mbox{ if } a=1, \\
(x-1)^{\delta(1+a+a^2)}  m_{\alpha^{-5}}(x) m_{\alpha^{-3}}(x) m_{\alpha^{-2}}(x) m_{\alpha^{-1}}(x)  
   & \mbox{ if } a+a^2 \ne 0 
\end{array} 
\right. 
\end{eqnarray*} 
where $m_{\alpha^{-j}}(x)$ and the function $\delta(x)$ were defined in Section \ref{sec-notations}, 
and the linear span $\ls_s$ of $s^{\infty}$ is given by 
\begin{eqnarray*}
\ls_s=\left\{ 
\begin{array}{ll} 
\delta(1) +m  &   \mbox{ if } a=0, \\
\delta(1) +3m  &   \mbox{ if } a=1, \\
\delta(1) +4m  &   \mbox{ if } a+a^2 \ne 0.  
\end{array} 
\right. 
\end{eqnarray*} 
\end{lemma}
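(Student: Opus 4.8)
The plan is to follow exactly the template established by the preceding lemmas (Lemma~\ref{lem-52DPDOGold} and the cubic/quartic cases), since the structure is identical: compute $D_5(x+1,a)$, apply the trace, read off the exponents that appear, and then invoke Lemma~\ref{lem-ls2}. First I would expand
\begin{eqnarray*}
D_5(x+1,a)=x^5+x^4+ax^3+ax^2+(1+a+a^2)x+1+a+a^2,
\end{eqnarray*}
which is the same expansion already used in the proof of Lemma~\ref{lem-52DPDOGold}. Applying the trace over $\gf(4)$ and using that $\tr$ is $\gf(q)$-linear and kills $q$-th power differences, I would reduce $\tr(D_5(\alpha^t+1,a))$ to a $\gf(4)$-linear combination of $\tr((\alpha^t)^5)$, $\tr((\alpha^t)^3)$, $\tr((\alpha^t)^2)$, $\tr(\alpha^t)$ plus the constant $\tr(1+a+a^2)$. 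The crucial bookkeeping is that over $\gf(4)$ one can no longer absorb the $x^2$ and $x^4$ terms into $x$ and $x$ as one does over $\gf(2)$, since squaring is not the Frobenius of $\gf(4)$; this is precisely why the coset $C_2$ now survives and the linear span jumps to $4m$ in the generic case.

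Next I would determine, as a function of $a$, which of the coefficients of $\alpha^{5t},\alpha^{3t},\alpha^{2t},\alpha^{t}$ vanish, giving the three regimes in the statement. The case $a=0$ collapses $D_5$ to $x^5$ and only the exponent $5$ survives, yielding $\m_s(x)=(x-1)^{\delta(1)}m_{\alpha^{-5}}(x)$ and linear span $\delta(1)+m$. The case $a=1$ should make the coefficient of $\alpha^t$ vanish while the coefficients of $\alpha^{5t},\alpha^{3t},\alpha^{2t}$ remain nonzero, producing the three-factor minimal polynomial and span $\delta(1)+3m$; here I would verify that over $\gf(4)$ the constant term $1+a+a^2$ evaluates so that $\delta(1)=\delta(1+a+a^2)$ when $a=1$, matching the stated $\delta(1)$. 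The generic case $a+a^2\neq0$ (equivalently $a\notin\gf(2)$) leaves all four exponents $1,2,3,5$ present, giving the four-factor product and span $\delta(1+a+a^2)+4m$.

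The final ingredient, exactly as in the earlier lemmas, is the claim that $\ell_1=\ell_2=\ell_3=\ell_5=m$ and that the $4$-cyclotomic cosets $C_1,C_2,C_3,C_5$ are pairwise disjoint when $m\ge3$; granting this, Lemma~\ref{lem-ls2} immediately converts the index set $I$ (the set of surviving exponents together with $0$ when the constant trace is nonzero) into $\m_s(x)$ and $|I|$ into $\ls_s$. I would establish the coset-size and disjointness claim by noting that $C_j=\{j,4j,16j,\dots\}\pmod{4^m-1}$, so $\ell_j=m$ fails only if $j$ lies in a proper subfield index, which the small fixed values $1,2,3,5$ avoid for $m\ge3$, and disjointness fails only if two of these lie in a common coset, i.e.\ differ by a power-of-$4$ multiple modulo $4^m-1$, which a direct check rules out. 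The main obstacle, and the place demanding genuine care rather than routine copying, is the $a=1$ analysis over $\gf(4)$: one must confirm that $\tr((\alpha^t)^4)$ re-expresses as a scalar multiple of $\tr((\alpha^t)^2)$ (via $y^4=(y^2)^2$ and the $\gf(4)$-structure of the trace) so that the $x^4$ term merges into $C_2$ rather than contributing a new coset, and that the coefficient of the $\alpha^t$ term genuinely cancels precisely when $a=1$; getting these $\gf(4)$-linearity identities right is what separates the three-factor case from the four-factor case.
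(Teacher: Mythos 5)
Your outline follows the paper's template correctly in every respect except one, and that one step is a genuine error rather than a stylistic difference: your handling of the $x^4$ term. Over $\gf(4)$ the field automorphism is $y\mapsto y^4$, so the trace from $\gf(4^m)$ to $\gf(4)$ satisfies $\tr(y^4)=\tr(y)$; equivalently, the exponent $4$ lies in the $4$-cyclotomic coset $C_1=\{1,4,16,\dots\}$, not in $C_2$. Your claimed identity that $\tr((\alpha^t)^4)$ ``re-expresses as a scalar multiple of $\tr((\alpha^t)^2)$ via $y^4=(y^2)^2$'' is false, precisely because $y\mapsto y^2$ is not an $\gf(4)$-automorphism --- the very fact you correctly invoke earlier to explain why $C_2$ survives at all. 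The paper's proof instead merges $x^4$ with the linear term: in characteristic $2$, $\tr\bigl(x^4+(1+a+a^2)x\bigr)=\tr\bigl((1+1+a+a^2)x\bigr)=\tr\bigl((a+a^2)x\bigr)$, yielding $s_t=\tr\bigl((\alpha^t)^5+a(\alpha^t)^3+a(\alpha^t)^2+(a+a^2)\alpha^t\bigr)+\tr(1+a+a^2)$.

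This is not a cosmetic slip: carried through, your merging produces the wrong case structure. Under your version the coefficient attached to the coset $C_1$ would be $1+a+a^2$, which does \emph{not} vanish at $a=1$ (it equals $1$), so you would wrongly retain the factor $m_{\alpha^{-1}}(x)$ in the $a=1$ case; conversely it \emph{does} vanish when $a$ is a primitive cube root of unity, where $a+a^2=1\ne 0$, so you would wrongly drop $m_{\alpha^{-1}}(x)$ in an instance of the generic case, contradicting the four-factor product and the span $4m$. With the correct identity the $C_1$ coefficient is $a+a^2$, which vanishes exactly when $a\in\gf(2)$, giving precisely the three regimes $a=0$, $a=1$, and $a+a^2\ne 0$ of the lemma. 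The remainder of your outline --- the expansion $D_5(x+1,a)=x^5+x^4+ax^3+ax^2+(1+a+a^2)x+1+a+a^2$, the claims $\ell_1=\ell_2=\ell_3=\ell_5=m$ with $C_1,C_2,C_3,C_5$ pairwise disjoint for $m\ge 3$, and the conversion of the surviving index set into $\m_s(x)$ and $\ls_s$ via Lemma~\ref{lem-ls2} --- coincides with the paper's proof and is fine once the $x^4$ step is repaired.
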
 

\begin{proof} 
Note that 
$$ 
D_5(x+1, a)=x^5+x^4+ax^3+ax^2+(1+a+a^2)x+1+a+a^2. 
$$
Since $q=2^2$, we have then 
\begin{eqnarray*} 
\tr(D_5(x+1, a)) = \tr\left(x^5+ax^3+ax^2+(a+a^2)x\right)+ \tr(1+a+a^2). 
\end{eqnarray*} 
By definition, 
\begin{eqnarray}\label{eqn-452DPDOGold2}
s_t = \tr\left((\alpha^t)^5+a(\alpha^t)^3+a(\alpha^t)^2+(a+a^2)(\alpha^t)\right)+ 
     \tr(1+a+a^2). 
\end{eqnarray}

It can be easily proved that $\ell_{1}=\ell_2=\ell_{3}=\ell_{5}=m$ and that $C_1$, $C_2$, $C_{3}$ and 
$C_5$ are pairwise disjoint when $m \ge 3$. 
The desired conclusions on the linear span and the minimal polynomial $\m_s(x)$ then follow from Lemma \ref{lem-ls2} 
and (\ref{eqn-452DPDOGold2}). 
\end{proof}

The following theorem supplies information on the code $\C_{s}$.    

\begin{theorem}\label{thm-452DO2Gold} 
Let $(p,q)=(2, 4)$ and $m \ge 3$. 
Then the code $\C_{s}$ defined by the sequence of Lemma \ref{lem-452DPDOGold} has parameters 
$[n, n-\ls_s, d]$ and generator polynomial $\m_s(x)$, where $\m_s(x)$ and $\ls_s$ are 
given in Lemma \ref{lem-452DPDOGold}, and 
\begin{eqnarray*}
\left\{ 
\begin{array}{ll} 
d = 2      & \mbox{if } a =0 \mbox{ and }  \delta(1)=0 \mbox{ and } \gcd(5, n)=5, \\
d = 3   & \mbox{if } a =0 \mbox{ and }  \gcd(5, n)=1, \\
d \ge 3   & \mbox{if } a=1, \\ 
d \ge 6   & \mbox{if } a+a^2 \ne 0 \mbox{ and }  \delta(1)=0, \\
d \ge 7   & \mbox{if } a+a^2 \ne 0 \mbox{ and }  \delta(1)=1.  
\end{array} 
\right. 
\end{eqnarray*} 
\end{theorem} 

\begin{proof} 
The dimension of $\C_{s}$ follows from Lemma \ref{lem-452DPDOGold} and the definition of the 
code $\C_s$.  We need to prove the conclusion on the minimum distance $d$ of $\C_{s}$. 

The proof of the lower bounds for the case $a=0$ is the same as that of Theorem \ref{thm-52DO2Gold}. 
When $a=1$, $\bar{\m}_s(x)$ has the zeros $\alpha^2$ and $\alpha^3$. Hence $d\ge 3$ when $a=1$. 

We finally consider the case that $a+a^2 \ne 0$. Note that $\bar{\m}_s(x)$ has the zeros 
$\alpha^i$ for all $i \in \{1,2,3,4,5\}$, and the additional zero $\alpha^0$ when $\delta(1)=1$. 
The conclusions on the minimum weight $d$ in this case follow from the BCH bound.   
\end{proof} 

Examples of the code of Theorem \ref{thm-452DO2Gold} are documented in arXiv:1206.4370, 
and many of them are optimal. 

\begin{Open} 
Determine the minimum distance $d$ of the code $\C_s$ in Theorem \ref{thm-452DO2Gold}. 
\end{Open}

We now consider the case $(p, q)=(2, 2^t)$, where $t \ge 3$, and  prove the following lemma.  

\begin{lemma}\label{lem-t52DPDOGold} 
Let $(p, q)=(2, 2^t)$ and $m \ge 3$, where $t \ge 3$. 
Let $s^{\infty}$ be the sequence of (\ref{eqn-sequence}), where $f(x)=D_5(x,a)=x^5-5ax^3+5a^2x$.  
Then the minimal polynomial $\m_s(x)$ of  $s^{\infty}$ is given by 
\begin{eqnarray*}
\m_s(x)=    
 \left\{ 
\begin{array}{ll} 
(x-1)^{\delta(1)} m_{\alpha^{-5}}(x)  m_{\alpha^{-4}}(x)  m_{\alpha^{-1}}(x) & \mbox{ if } a =0, \\
\prod_{i=2}^5 m_{\alpha^{-i}}(x) &  \mbox{ if } 1+a+a^2=0, \\
(x-1)^{\delta(1+a+a^2)} \prod_{i=1}^5 m_{\alpha^{-i}}(x)  & \mbox{ if } a+a^2+a^3 \ne 0, 
\end{array} 
\right. 
\end{eqnarray*} 
where $m_{\alpha^{-j}}(x)$ and the function $\delta(x)$ were defined in Section \ref{sec-notations}, 
and the linear span $\ls_s$ of $s^{\infty}$ is given by 
\begin{eqnarray*}
\ls_s=\left\{ 
\begin{array}{ll} 
\delta(1) +3m  &   \mbox{ if } a=0, \\
\delta(1) +4m  &   \mbox{ if } 1+a+a^2=0, \\
\delta(1) +5m  &   \mbox{ if } a+a^2+a^3 \ne 0.  
\end{array} 
\right. 
\end{eqnarray*} 
\end{lemma}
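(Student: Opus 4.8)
The plan is to follow the template established by the preceding lemmas in this section (Lemmas \ref{lem-52DPDOGold} and \ref{lem-452DPDOGold}), since the structure of the argument is identical and only the arithmetic over $\gf(2^t)$ with $t \ge 3$ changes. First I would compute $D_5(x+1,a)$ explicitly using the given definition $D_5(x,a)=x^5-5ax^3+5a^2x$, obtaining the expansion already recorded in the earlier lemma, namely
\begin{eqnarray*}
D_5(x+1,a)=x^5+x^4+ax^3+ax^2+(1+a+a^2)x+1+a+a^2.
\end{eqnarray*}
The key difference from the $q=4$ case is that now $q=2^t$ with $t\ge 3$, so when applying the trace function $\tr$ from $\gf(r)$ to $\gf(q)$ one must be careful about which monomials $x^{2^i}$ fold together under the Frobenius action. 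In particular I would use $\tr(x^4)=\tr(x)^{?}$-type identities only where legitimate; over $\gf(2^t)$ with $t\ge3$ the exponents $1,2,3,4,5$ lie in distinct $2$-cyclotomic cosets (unlike the $q=4$ case where $4\equiv 1$ collapses), so the term $x^4$ survives as a genuine contribution. This is precisely why a fourth root $\alpha^{-4}$ appears in the $a=0$ branch here, whereas it did not in Lemma \ref{lem-452DPDOGold}.

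Next I would read off the defining sequence in the form $s_t=\tr\!\left(g(\alpha^t)\right)+\tr(c)$ for an explicit polynomial $g$ and constant $c=1+a+a^2$, analogous to equations (\ref{eqn-52DPDOGold2}) and (\ref{eqn-452DPDOGold2}). The exponents appearing in $g$ with nonzero coefficient determine, via Lemma \ref{lem-ls2}, exactly the index set $I$ and hence both $\m_s(x)$ and $\ls_s$. The three cases split according to which coefficients vanish: the coefficient structure shows that when $a=0$ the surviving exponents are $\{1,4,5\}$, giving the factors $m_{\alpha^{-1}},m_{\alpha^{-4}},m_{\alpha^{-5}}$; the degenerate condition $1+a+a^2=0$ forces the constant term $\tr(c)$ to vanish (eliminating the $(x-1)$ factor) while the surviving exponents become $\{2,3,4,5\}$; and in the generic case $a+a^2+a^3\ne0$ all of $\{1,2,3,4,5\}$ contribute, giving $\prod_{i=1}^5 m_{\alpha^{-i}}(x)$ together with the parity factor $(x-1)^{\delta(1+a+a^2)}$.

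The remaining ingredient, exactly as in the two earlier lemmas, is the coset-size and disjointness claim: I must verify that $\ell_1=\ell_2=\ell_3=\ell_4=\ell_5=m$ and that the $q$-cyclotomic cosets $C_1,C_2,C_3,C_4,C_5$ are pairwise disjoint for $m\ge3$ and $q=2^t$ with $t\ge3$. Once this is in hand, each $m_{\alpha^{-i}}(x)$ has degree $m$ and the factors are coprime, so the linear spans $3m,4m,5m$ (plus the $\delta$ correction) follow immediately by counting $|I|$ in Lemma \ref{lem-ls2}. The main obstacle I anticipate is the bookkeeping in the degenerate case $1+a+a^2=0$: one must check that this algebraic condition is consistent (i.e.\ that such $a$ exist in $\gf(2^t)$, which requires $3\mid t$ since $1+a+a^2=0$ means $a$ is a primitive cube root of unity) and, crucially, that it simultaneously forces $\tr(1+a+a^2)=\tr(0)=0$ so the $(x-1)^{\delta(1)}$ factor genuinely drops out, explaining why only that middle branch lacks a parity factor. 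Verifying the coset disjointness for small $m$ near the boundary $m=3$ is the other delicate point, since collisions such as $5\equiv 2\cdot\text{(something)}$ modulo $q^m-1$ could in principle merge cosets; ruling these out is the technical heart of the argument, though for $q=2^t$, $t\ge3$, $m\ge3$ it goes through routinely once the relevant inequalities on $2^{ti}\bmod(2^{tm}-1)$ are checked.
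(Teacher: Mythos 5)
Your proposal follows the paper's proof exactly: expand $D_5(x+1,a)=x^5+x^4+ax^3+ax^2+(1+a+a^2)x+(1+a+a^2)$, observe that for $q=2^t$ with $t\ge 3$ none of the exponents $1,\dots,5$ fold together under the Frobenius $x\mapsto x^q$ (so, unlike the $q=4$ case, the $x^4$ term survives as a separate contribution), split into cases according to which coefficients vanish, and invoke Lemma \ref{lem-ls2} together with the claims $\ell_i=m$ and pairwise disjointness of $C_1,\dots,C_5$ — your coefficient bookkeeping is in fact cleaner than the paper's, whose equation (\ref{eqn-t52DPDOGold2}) carries over the $q=4$ linear coefficient $a+a^2$ as a typo where $1+a+a^2$ is meant (your case analysis is consistent with the correct coefficient and with the lemma's statement). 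One aside in your write-up is wrong, though harmlessly so: $1+a+a^2=0$ makes $a$ a primitive cube root of unity, which lies in $\gf(4)$, so such $a$ exist in $\gf(2^t)$ iff $t$ is even (and in $\gf(r)=\gf(2^{tm})$, where $a$ actually lives, iff $tm$ is even), not iff $3\mid t$; since the lemma is conditional on that case, this slip does not affect the proof.
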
 

\begin{proof} 
Note that 
$$ 
D_5(x+1, a)=x^5+x^4+ax^3+ax^2+(1+a+a^2)x+1+a+a^2. 
$$
Since $q=2^t$, where $t \ge 3$, we have then 
\begin{eqnarray*}
\tr(D_5(x+1, a)) = \tr\left(x^5+x^4+ax^3+ax^2+(1+a+a^2)x\right) 
                           + \tr(1+a+a^2). 
\end{eqnarray*}
By definition, 
\begin{eqnarray}\label{eqn-t52DPDOGold2}
s_t = \tr\left((\alpha^t)^5+(\alpha^t)^4+a(\alpha^t)^3+a(\alpha^t)^2+(a+a^2)\alpha^t\right)+  
      \tr(1+a+a^2). 
\end{eqnarray}

It can be easily proved that $\ell_{i}=m$ for all $1 \le i \le 5$ and that these $C_i$, where $1 \le i \le 5$, 
are pairwise disjoint. 
The desired conclusions on the linear span and the minimal polynomial $\m_s(x)$ then follow from Lemma \ref{lem-ls2} 
and (\ref{eqn-t52DPDOGold2}). 
\end{proof}

The following theorem provides information on the code $\C_{s}$.    

\begin{theorem}\label{thm-t52DO2Gold} 
Let $(p, q)=(2, 2^t)$, where $t \ge 3$. 
Then the code $\C_{s}$ defined by the sequence of Lemma \ref{lem-t52DPDOGold} has parameters 
$[n, n-\ls_s, d]$ and generator polynomial $\m_s(x)$, where $\m_s(x)$ and $\ls_s$ are 
given in Lemma \ref{lem-t52DPDOGold}, and 
\begin{eqnarray*}
\left\{ 
\begin{array}{ll} 
d \ge 3   & \mbox{if } a =0 \mbox{ and }  \delta(1)=0,  \\
d \ge 4      & \mbox{if } a =0 \mbox{ and }  \delta(1)=1, \\
d \ge 5   & \mbox{if } 1+a+a^2=0, \\
d \ge 6   & \mbox{if } a+a^2+a^3 \ne 0 \mbox{ and }  \delta(1)=0, \\
d \ge 7   & \mbox{if } a+a^2+a^3 \ne 0 \mbox{ and }  \delta(1)=1. 
\end{array} 
\right. 
\end{eqnarray*} 
\end{theorem}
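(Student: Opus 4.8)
The plan is to follow the template of the preceding theorems: read off the dimension directly from Lemma~\ref{lem-t52DPDOGold}, and then bound the minimum distance by locating runs of consecutive roots of the reciprocal polynomial $\bar{\m}_s(x)$ and invoking the BCH bound. Since $\C_s$ and $\bar{\C}_s$ have the same weight distribution, it suffices to work with $\bar{\C}_s$, whose root set consists of the $\alpha^j$ with $j$ running over the positive $q$-cyclotomic cosets appearing in $\m_s(x)$, together with $\alpha^0$ when the relevant $\delta$-value equals $1$. First I would record, from Lemma~\ref{lem-t52DPDOGold}, the three root configurations in the low range: exponents $\{1,4,5\}$ (plus $0$ if $\delta(1)=1$) when $a=0$; exponents $\{2,3,4,5\}$ when $1+a+a^2=0$; and exponents $\{1,2,3,4,5\}$ (plus $0$ if $\delta(1+a+a^2)=1$) when $a+a^2+a^3\neq 0$. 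Because $q=2^t\ge 8$, each of $1,2,3,4,5$ is its own coset leader and every other member of its coset is at least $q$, so Frobenius conjugation drags in no additional exponent below $6$; hence the roots among $\{0,1,\dots,5\}$ are exactly those listed.

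Three of the configurations contain a long consecutive block and are immediate. When $1+a+a^2=0$ the four consecutive roots $\alpha^2,\alpha^3,\alpha^4,\alpha^5$ force $d\ge 5$. When $a+a^2+a^3\neq 0$ the five consecutive roots $\alpha^1,\dots,\alpha^5$ give $d\ge 6$, and if in addition $\delta(1+a+a^2)=1$ the root $\alpha^0$ extends the block to $\alpha^0,\dots,\alpha^5$, yielding $d\ge 7$. Finally, when $a=0$ and $\delta(1)=0$ the consecutive pair $\alpha^4,\alpha^5$ already gives $d\ge 3$. These four bounds need nothing beyond the BCH bound, exactly as in the proofs of Theorems~\ref{thm-52DO2Gold} and \ref{thm-452DO2Gold}.

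The single delicate case, and the main obstacle, is $a=0$ with $\delta(1)=1$, where the claim is $d\ge 4$ but the root set is $\{0,1,4,5\}$, which contains no run of three consecutive exponents; plain BCH therefore yields only $d\ge 3$. The obstruction is genuine: the ``even-weight subcode'' shortcut used in the binary analogue does not transfer, since over $\gf(2^t)$ with $t\ge 3$ the relation $\sum_i c_i=0$ coming from the root $\alpha^0=1$ no longer forces the Hamming weight to be even. Instead I would invoke the Hartmann--Tzeng bound applied to the two length-two progressions $\{0,1\}$ and $\{4,5\}$: taking first difference $1$ (coprime to $n$) and second difference $4$, and using that $n=2^{tm}-1$ is odd so that $\gcd(4,n)=1<3$, the bound produces $d\ge 3+1=4$. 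Equivalently, one can argue directly that no weight-$3$ codeword of $\bar{\C}_s$ can meet the four linear constraints imposed simultaneously by the roots $\alpha^0,\alpha^1,\alpha^4,\alpha^5$; excluding this one low-weight configuration is the only real computation the proof requires.
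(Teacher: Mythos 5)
Your proposal is correct, and it actually supplies something the paper does not: the paper's ``proof'' of this theorem is a one-line remark that it is similar to that of Theorem~\ref{thm-452DO2Gold}, whose $a=0$ case in turn leans on the proof of Theorem~\ref{thm-52DO2Gold}, where the bound $d\ge 4$ for $a=0$, $\delta(1)=1$ is obtained from the even-weight-subcode argument. As you correctly observe, that shortcut is specific to $q=2$: over $\gf(2^t)$ with $t\ge 3$, the root $\alpha^0$ only forces $\sum_i c_i=0$, not even Hamming weight, and since the root set $\{0,1,4,5\}$ contains no three consecutive exponents, the plain BCH bound stalls at $d\ge 3$. Your patch via the Hartmann--Tzeng bound is valid: taking $b=0$, a consecutive run $\{0,1\}$ (so $\delta=3$), step $c=4$ with $s=1$ reproduces exactly the zeros $\{0,1,4,5\}$, and since $n=2^{tm}-1$ is odd, $\gcd(4,n)=1<\delta$, giving $d\ge \delta+s=4$. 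Notably the paper lists \cite{HT72} in its bibliography without ever invoking it in a written proof, so this is plausibly the intended tool for precisely this case; in any event your argument is the only complete one on record. The other four cases are handled exactly as the paper's template dictates (BCH on runs $\{4,5\}$, $\{2,3,4,5\}$, $\{1,\dots,5\}$, and $\{0,\dots,5\}$), and your cyclotomic-coset check that no extra exponents below $6$ intrude is the same disjointness fact asserted in Lemma~\ref{lem-t52DPDOGold}. One further merit: you silently use the correct condition $\delta(1+a+a^2)$ for the appearance of the root $\alpha^0$ in the case $a+a^2+a^3\ne 0$, consistent with the minimal polynomial in Lemma~\ref{lem-t52DPDOGold}; the occurrences of $\delta(1)$ in the theorem statement and in the linear-span formula of the lemma for that case are misprints, since $\tr(1+a+a^2)=\tr(1)+\tr(a)+\tr(a)^2$ need not equal $\tr(1)$.
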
 

\begin{proof} 
The proof of this theorem is similar to that of Theorem \ref{thm-452DO2Gold}, and is omitted.  
\end{proof} 

\begin{Open} 
Determine the minimum distance $d$ of the code $\C_s$ in Theorem \ref{thm-t52DO2Gold}. 
\end{Open} 

Examples of the code of Theorem \ref{thm-t52DO2Gold} can be found in arXiv:1206.4370, 
and many of them are optimal. The code of Theorem \ref{thm-t52DO2Gold} is 
not a BCH code when $a=0$, and a BCH code otherwise.    

We now consider the case $q=p=3$ and  state the following lemma and theorem without proofs.  

\begin{lemma}\label{lem-352DPDOGold} 
Let $q=p=3$ and $m \ge 3$. 
Let $s^{\infty}$ be the sequence of (\ref{eqn-sequence}), where $f(x)=D_5(x,a)=x^5-5ax^3+5a^2x$.  
Then the minimal polynomial $\m_s(x)$ of  $s^{\infty}$ is given by 
\begin{eqnarray*}
\m_s(x)=  
 \left\{ 
\begin{array}{ll} 
(x-1)^{\delta(1+a+2a^2)} m_{\alpha^{-5}}(x)  m_{\alpha^{-4}}(x) m_{\alpha^{-2}}(x) 
   & \mbox{ if } a - a^6 =0, \\
(x-1)^{\delta(1+a+2a^2)} \prod_{i=2}^5 m_{\alpha^{-i}}(x)  
   & \mbox{ if } a -a^6 \ne 0, 
\end{array} 
\right. 
\end{eqnarray*} 
where $m_{\alpha^{-j}}(x)$ and the function $\delta(x)$ were defined in Section \ref{sec-notations}, 
and the linear span $\ls_s$ of $s^{\infty}$ is given by 
\begin{eqnarray*}
\ls_s=\left\{ 
\begin{array}{ll} 
\delta(1+a+2a^2) + 3m    &   \mbox{ if } a - a^6 =0, \\
\delta(1+a+2a^2) + 4m    &   \mbox{ if } a - a^6 \ne 0. 
\end{array} 
\right. 
\end{eqnarray*} 
\end{lemma}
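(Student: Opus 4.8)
The plan is to follow the template of the preceding lemmas: expand $D_5(x+1,a)$, apply the trace, and read off the nonzero coefficients using Lemma~\ref{lem-ls2}. First I would compute, in characteristic $3$ (using $(x+1)^3=x^3+1$ together with $-5\equiv 1$ and $5\equiv 2 \pmod 3$),
\[
D_5(x+1,a)=x^5+2x^4+(1+a)x^3+x^2+(2+2a^2)x+(1+a+2a^2),
\]
so that $s_t=\tr\!\big((\alpha^t)^5+2(\alpha^t)^4+(1+a)(\alpha^t)^3+(\alpha^t)^2+(2+2a^2)\alpha^t\big)+\tr(1+a+2a^2)$. The constant term contributes $\tr(1+a+2a^2)$ to the coefficient of $\alpha^{0\cdot t}$ in the expansion of Lemma~\ref{lem-ls2}, and this is nonzero exactly when $\delta(1+a+2a^2)=1$, producing the factor $(x-1)^{\delta(1+a+2a^2)}$ common to both cases.

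Next I would expand $\tr(c\,\alpha^{jt})=\sum_{k=0}^{m-1}c^{3^{k}}\alpha^{j3^{k}t}$ for each monomial and collect the coefficients by $3$-cyclotomic coset. The feature peculiar to characteristic $3$ is that multiplication by $3$ is the Frobenius action defining the cosets, so $3\in C_1$ and the exponents $1$ and $3$ fall into the \emph{same} coset $C_1=C_3=\{1,3,9,\dots,3^{m-1}\}$, whereas $2,4,5$ lie in the distinct cosets $C_2,C_4,C_5$. For $m\ge 3$ one checks (exactly as in the earlier lemmas) that $\ell_1=\ell_2=\ell_4=\ell_5=m$, that $C_1\cap\{1,\dots,5\}=\{1,3\}$, and that $C_1,C_2,C_4,C_5$ are pairwise disjoint. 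The coefficients attached to $C_2,C_4,C_5$ come only from the $x^2$, $x^4$, $x^5$ terms and equal the constants $1,2,1\in\gf(3)^{*}$; being nonzero, they force the factors $m_{\alpha^{-2}}(x)$, $m_{\alpha^{-4}}(x)$, $m_{\alpha^{-5}}(x)$ into $\m_s(x)$ in every case.

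The only coset whose presence depends on $a$ is $C_1=C_3$, which receives contributions solely from the linear and cubic terms. Combining the $k=0$ part of $\tr((2+2a^2)\alpha^t)$ with the $k=m-1$ part of $\tr((1+a)\alpha^{3t})$ gives the coefficient of $\alpha^{t}$ as
\[
c_1=(2+2a^2)+(1+a)^{3^{m-1}} .
\]
Since $s_t\in\gf(3)$ gives $c_{3i}=c_i^{3}$ within each coset, the whole coset $C_1$ vanishes iff $c_1=0$, i.e.\ iff $(1+a)^{3^{m-1}}=1+a^2$. This is the one genuinely delicate point, and I would settle it by applying the Frobenius: cubing is a bijection of $\gf(3^m)$, and since $\big((1+a)^{3^{m-1}}\big)^{3}=1+a$ while $(1+a^2)^{3}=1+a^6$, the condition is equivalent to $1+a=1+a^6$, that is $a-a^6=0$. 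Hence $C_1=C_3$ is absent exactly when $a-a^6=0$, giving $\m_s(x)=(x-1)^{\delta(1+a+2a^2)}m_{\alpha^{-5}}(x)m_{\alpha^{-4}}(x)m_{\alpha^{-2}}(x)$ and $\ls_s=3m+\delta(1+a+2a^2)$; otherwise $c_1\ne 0$, the coset is present, and $\m_s(x)=(x-1)^{\delta(1+a+2a^2)}\prod_{i=2}^{5}m_{\alpha^{-i}}(x)$ with $\ls_s=4m+\delta(1+a+2a^2)$. Both conclusions then follow from Lemma~\ref{lem-ls2}; the main obstacle is precisely the reduction of the exponential condition $(1+a)^{3^{m-1}}=1+a^2$ to the polynomial identity $a-a^6=0$, everything else being routine coset bookkeeping.
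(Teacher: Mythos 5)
Your proposal is correct and follows exactly the template the paper intends: the paper omits this proof (citing similarity to Lemma~\ref{lem-t52DPDOGold}), and your argument --- expanding $D_5(x+1,a)$ in characteristic $3$, folding the $x^3$ term into the coset $C_1=C_3$ via $\tr(z)=\tr(z^3)$, and invoking Lemma~\ref{lem-ls2} --- is the intended one, with the reduction of $(1+a)^{3^{m-1}}=1+a^2$ to $a-a^6=0$ by Frobenius injectivity handled correctly. (Indeed, the coefficient at index $3$ works out to exactly $a-a^6$, which gives the same conclusion even more directly and explains the lemma's phrasing.)
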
 

\begin{proof} 
The proof is similar to that of Lemma \ref{lem-t52DPDOGold}, and is omitted here.  
\end{proof}

The following theorem gives information on the code $\C_{s}$.    

\begin{theorem}\label{thm-352DO2Gold} 
Let $q=p=3$ and $m \ge 3$. 
Then the code $\C_{s}$ defined by the sequence of Lemma \ref{lem-352DPDOGold} has parameters 
$[n, n-\ls_s, d]$ and generator polynomial $\m_s(x)$, where $\m_s(x)$ and $\ls_s$ are 
given in Lemma \ref{lem-352DPDOGold}, and 
\begin{eqnarray*}
\left\{ 
\begin{array}{ll} 
d \ge 4   & \mbox{if } a-a^6=0,  \\
d \ge 7   & \mbox{if } a-a^6\ne 0 \mbox{ and }  \delta(1+a+2a^2)=0, \\
d \ge 8   & \mbox{if } a-a^6\ne 0 \mbox{ and }  \delta(1+a+2a^2)=1. 
\end{array} 
\right. 
\end{eqnarray*} 
\end{theorem}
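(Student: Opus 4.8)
The plan is to reproduce the template used for Theorem~\ref{thm-52DO2Gold} and Theorem~\ref{thm-452DO2Gold}: read off the dimension from Lemma~\ref{lem-352DPDOGold}, and derive every lower bound on the minimum distance $d$ from the BCH bound applied to the reciprocal code $\bar{\C}_s$, which has the same weight distribution as $\C_s$. First I would note that, by Lemma~\ref{lem-352DPDOGold}, the generator polynomial $\m_s(x)$ has degree $\ls_s$, so by the definition of $\C_s$ the code has length $n$ and dimension $n-\ls_s$; this part requires nothing beyond the lemma.

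The substance of the proof is to exhibit a long run of consecutive zeros of $\bar{\m}_s(x)$. Since $\m_s(x)=(x-1)^{\delta(1+a+2a^2)}\prod_i m_{\alpha^{-i}}(x)$, its reciprocal $\bar{\m}_s(x)$ has as its roots the elements $\alpha^{j}$ with $j$ running over the $3$-cyclotomic cosets of the indices $i$ occurring in $\m_s(x)$, together with $\alpha^0$ precisely when $\delta(1+a+2a^2)=1$. The one observation that is special to $q=3$ and drives everything is that $3=1\cdot 3\in C_1$, so that $C_3=C_1$ and hence $m_{\alpha^{-3}}(x)=m_{\alpha^{-1}}(x)$. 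I would first verify (as is implicit in Lemma~\ref{lem-352DPDOGold}) that for $m\ge 3$ the cosets $C_1,C_2,C_4,C_5$ have size $m$ and are pairwise distinct, so that the only coincidence among the relevant factors is this collapse of $C_3$ onto $C_1$.

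In the case $a-a^6\ne 0$ the factor $m_{\alpha^{-3}}(x)$ inside $\prod_{i=2}^{5}m_{\alpha^{-i}}(x)$ thus silently contributes the conjugate $\alpha^{-1}$, and consequently the roots of $\bar{\m}_s(x)$ cover the exponent set $C_1\cup C_2\cup C_4\cup C_5$. Since $1\in C_1$, $2\in C_2$, $3\in C_1$, $4\in C_4$, $5\in C_5$ and $6=2\cdot 3\in C_2$, the elements $\alpha^1,\alpha^2,\dots,\alpha^6$ are six consecutive roots of $\bar{\m}_s(x)$, and the BCH bound yields $d\ge 7$; when $\delta(1+a+2a^2)=1$ the additional root $\alpha^0$ lengthens the run to $\alpha^0,\alpha^1,\dots,\alpha^6$, and the BCH bound gives $d\ge 8$. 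In the remaining case $a-a^6=0$, where $\m_s(x)=(x-1)^{\delta(1+a+2a^2)}m_{\alpha^{-5}}(x)m_{\alpha^{-4}}(x)m_{\alpha^{-2}}(x)$, the same reciprocation shows that $\bar{\m}_s(x)$ has roots over $C_2\cup C_4\cup C_5$, which contains $4$, $5$ and $6=2\cdot 3$; the three consecutive roots $\alpha^4,\alpha^5,\alpha^6$ give $d\ge 4$ by the BCH bound.

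The genuinely nonroutine point, and the step I would be most careful about, is the identity $C_3=C_1$. It is exactly this collapse that upgrades the naive run $\alpha^2,\dots,\alpha^6$ of length five (which would only give $d\ge 6$) to the run $\alpha^1,\dots,\alpha^6$ of length six, and hence is what produces the stated bound $d\ge 7$. I would therefore make sure the coset computations are correct before invoking the BCH bound, since every one of the three displayed bounds depends on the precise length of the consecutive run available.
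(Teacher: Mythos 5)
Your proof is correct and follows exactly the route the paper intends: the paper omits this proof as ``similar to that of Theorem \ref{thm-452DO2Gold}'', i.e., reading the dimension off Lemma \ref{lem-352DPDOGold} and applying the BCH bound to a run of consecutive zeros of $\bar{\m}_s(x)$, which is precisely your argument. Your explicit verification of the coset collapse $C_3=C_1$ for $q=3$ (so that $m_{\alpha^{-3}}(x)=m_{\alpha^{-1}}(x)$, extending the run from $\alpha^2,\dots,\alpha^6$ to $\alpha^1,\dots,\alpha^6$) correctly supplies the one nontrivial detail the paper leaves implicit, and it is exactly what yields $d\ge 7$ rather than the naive $d\ge 6$.
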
 

\begin{proof} 
The proof of this theorem is similar to that of Theorem \ref{thm-452DO2Gold}, and is omitted.  
\end{proof} 

\begin{Open} 
Determine the minimum distance $d$ of the code $\C_s$ in Theorem \ref{thm-352DO2Gold} (our experimental data 
indicates that the lower bounds are the specific values of $d$). 
\end{Open}

Examples of the code of Theorem \ref{thm-352DO2Gold} are described in arXiv:1206.4370, 
and some of them are optimal.

We now consider the case $(p, q)=(3, 3^t)$, where $t \ge 3$, and state the following lemma and theorem 
without proofs.  

\begin{lemma}\label{lem-t53DPDOGold} 
Let $(p, q)=(3, 3^t)$ and $m \ge 2$, where $t \ge 2$. 
Let $s^{\infty}$ be the sequence of (\ref{eqn-sequence}), where $f(x)=D_5(x,a)=x^5-5ax^3+5a^2x$.  
Then the minimal polynomial $\m_s(x)$ of  $s^{\infty}$ is given by 
\begin{eqnarray*}
\m_s(x)=    
 \left\{ 
\begin{array}{ll} 
(x-1)^{\delta(1)} m_{\alpha^{-5}}(x)  m_{\alpha^{-4}}(x)  m_{\alpha^{-2}}(x) m_{\alpha^{-1}}(x) 
  & \mbox{ if } 1+a =0, \\
(x-1)^{\delta(a-1)} m_{\alpha^{-5}}(x) m_{\alpha^{-4}}(x) m_{\alpha^{-3}}(x) m_{\alpha^{-2}}(x)  
  & \mbox{ if } 1+a^2=0, \\
(x-1)^{\delta(1+a+2a^2)} \prod_{i=1}^5 m_{\alpha^{-i}}(x)  
  &  \mbox{ if } (a+1)(a^2+1) \ne 0, 
\end{array} 
\right. 
\end{eqnarray*} 
where $m_{\alpha^{-j}}(x)$ and the function $\delta(x)$ were defined in Section \ref{sec-notations}, 
and the linear span $\ls_s$ of $s^{\infty}$ is given by 
\begin{eqnarray*}
\ls_s=\left\{ 
\begin{array}{ll} 
\delta(1) +4m  &   \mbox{ if } a+1=0, \\
\delta(a-1) +4m  &   \mbox{ if } a^2+1=0, \\
\delta(1+a+2a^2) +5m  &   \mbox{ if } (a+1)(a^2+1) \ne 0.  
\end{array} 
\right. 
\end{eqnarray*} 
\end{lemma}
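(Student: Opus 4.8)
The plan is to follow the template of the proofs of Lemmas~\ref{lem-52DPDOGold} and~\ref{lem-t52DPDOGold}: expand $D_5(x+1,a)$, apply the trace to read off the defining coefficients of $s^\infty$ in the form demanded by Lemma~\ref{lem-ls2}, and then verify the sizes and disjointness of the cyclotomic cosets involved. Since $q=3^t$ has characteristic $3$, I first reduce $D_5(x,a)=x^5-5ax^3+5a^2x$ modulo $3$ and expand about $x+1$ using $(x+1)^3=x^3+1$. A direct computation gives
$$
D_5(x+1,a)=x^5+2x^4+(1+a)x^3+x^2+(2+2a^2)x+(1+a+2a^2).
$$
The decisive structural feature of this case is that $t\ge 2$ forces $q=3^t\ge 9>5$, so multiplication by $q$ carries each exponent in $\{1,2,3,4,5\}$ out of that range without wrap-around (indeed $5q^{m-1}<q^m-1$ for $m\ge 2$). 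Hence, in contrast with the $q=2$ and $q=4$ cases, no Frobenius identity $\tr(y^{q^k})=\tr(y)$ merges any of these exponents, and applying $\tr$ term by term yields
\begin{eqnarray*}
s_t &=& \tr\left((\alpha^t)^5+2(\alpha^t)^4+(1+a)(\alpha^t)^3+(\alpha^t)^2+(2+2a^2)\alpha^t\right) \\
    & & +\, \tr(1+a+2a^2).
\end{eqnarray*}

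Next I would perform the case split purely by inspecting which coefficients vanish. The exponents $2,4,5$ always carry the nonzero constants $1,2,1$, so $C_2,C_4,C_5$ always contribute the factors $m_{\alpha^{-2}}(x)$, $m_{\alpha^{-4}}(x)$, $m_{\alpha^{-5}}(x)$. The exponent $3$ contributes $m_{\alpha^{-3}}(x)$ precisely when $1+a\ne 0$, and the exponent $1$ contributes $m_{\alpha^{-1}}(x)$ precisely when $2(1+a^2)\ne 0$, i.e.\ when $a^2\ne -1$. This produces exactly the three stated cases: if $a+1=0$ then $1+a^2=2\ne 0$, so $C_3$ drops out while $C_1$ survives; if $a^2+1=0$ (which forces $a\ne -1$, since $(-1)^2=1\ne -1$ in characteristic $3$) then $1+a\ne 0$, so $C_1$ drops out while $C_3$ survives; and if $(a+1)(a^2+1)\ne 0$ all five cosets survive. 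The one nonmechanical point is the constant term $\tr(1+a+2a^2)$, which controls the factor $(x-1)$: by $\gf(q)$-linearity of $\tr$ and $2\ne 0$ one has $\delta(2c)=\delta(c)$, so when $a=-1$ the constant equals $2$ and $\delta(2)=\delta(1)$, matching $(x-1)^{\delta(1)}$; substituting $a^2=-1$ gives $1+a+2a^2=a-1$, matching $(x-1)^{\delta(a-1)}$; and otherwise the exponent is $\delta(1+a+2a^2)$.

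To conclude, I would justify that $\ell_i=\ell_{n-i}=m$ for $1\le i\le 5$ and that $C_1,\dots,C_5$ are pairwise disjoint. If $\ell_i=d$ were a proper divisor of $m$, then $d\le m/2$ and $q^m-1\mid i(q^d-1)$, which is impossible since $0<i(q^d-1)\le 5(q^{m/2}-1)<q^m-1$ once $q^{m/2}>4$, and $q\ge 9$, $m\ge 2$ give $q^{m/2}\ge 9$; disjointness is immediate because $jq^k\ge q\ge 9>5\ge i$ for $k\ge 1$ and such $jq^k$ never reduces back into $\{1,\dots,5\}$. With the index set $I$ of Lemma~\ref{lem-ls2} thereby identified as the union of the surviving cosets, together with $\{0\}$ exactly when the constant term is nonzero, Lemma~\ref{lem-ls2} delivers the claimed $\m_s(x)$ and $\ls_s$ in all three cases. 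I expect the main obstacle to be bookkeeping rather than depth: keeping the coefficient-vanishing trichotomy aligned with the correct $\delta$-expression for the constant, in particular the two reconciliations $\delta(2)=\delta(1)$ and $1+a+2a^2=a-1$ (under $a^2=-1$), which are what convert the single quantity $\tr(1+a+2a^2)$ into the three different generator polynomials stated.
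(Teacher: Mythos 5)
Your proposal is correct and takes essentially the same approach the paper intends: the paper omits this proof, saying only that it is similar to that of Lemma \ref{lem-t52DPDOGold}, and your argument---the characteristic-$3$ expansion $D_5(x+1,a)=x^5+2x^4+(1+a)x^3+x^2+2(1+a^2)x+(1+a+2a^2)$, the observation that $q=3^t\ge 9$ prevents any Frobenius merging of the exponents $1,\dots,5$, the coset-size and disjointness verification, and the application of Lemma \ref{lem-ls2}---is exactly that template carried out in full. Your two reconciliations, $\delta(2)=\delta(1)$ when $a=-1$ and $1+a+2a^2=a-1$ when $a^2=-1$, correctly recover the stated $\delta$-exponents, so your write-up in fact supplies the details the paper leaves to the reader.
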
 

\begin{proof} 
The proof is similar to that of Lemma \ref{lem-t52DPDOGold}, and is omitted here.  
\end{proof}  

The following theorem supplies information on the code $\C_{s}$.    

\begin{theorem}\label{thm-t53DO2Gold} 
Let $(p, q)=(3, 3^t)$ and $m \ge 2$, where $t \ge 2$. 
Then the code $\C_{s}$ defined by the sequence of Lemma \ref{lem-t52DPDOGold} has parameters 
$[n, n-\ls_s, d]$ and generator polynomial $\m_s(x)$, where $\m_s(x)$ and $\ls_s$ are 
given in Lemma \ref{lem-t53DPDOGold}, and 
\begin{eqnarray*}
\left\{ 
\begin{array}{ll} 
d \ge 3      & \mbox{if } a =-1 \mbox{ and }  \delta(1)=0,  \\
d \ge 4      & \mbox{if } a =-1 \mbox{ and }  \delta(1)=1,  \\
d \ge 5      & \mbox{if } a^2 =-1 \mbox{ and }  \delta(a-1)=0, \\
d \ge 6      & \mbox{if } a^2 =-1 \mbox{ and }  \delta(a-1)=1, \\
d \ge 6      & \mbox{if } (a+1)(a^2+1) \ne 0 \mbox{ and }  \delta(1+a+2a^2)=0, \\
d \ge 7      & \mbox{if } (a+1)(a^2+1)  \ne 0 \mbox{ and }  \delta(1+a+2a^2)=1. 
\end{array} 
\right. 
\end{eqnarray*} 
\end{theorem}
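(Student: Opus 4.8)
The plan is to follow the template of the proofs of Theorems \ref{thm-452DO2Gold} and \ref{thm-t52DO2Gold}. The parameters $n$ and $n-\ls_s$ are immediate from Lemma \ref{lem-t53DPDOGold} and the definition of $\C_s$, so the entire content lies in the lower bounds on the minimum distance $d$. For these I would argue on the reciprocal code $\bar{\C}_s$, which has the same weight distribution as $\C_s$: its generator $\bar{\m}_s(x)$ has roots exactly at the $\alpha^j$ with $j$ running over the $q$-cyclotomic cosets that appear in $\ls_s$, and these cosets are pairwise disjoint of size $m$ by (the analogue of) Lemma \ref{lem-t52DPDOGold}. The mechanism throughout is the BCH bound: a block of consecutive exponents among the roots of length $\ell$ forces $d\ge \ell+1$, and the factor $(x-1)^{\delta(\cdot)}$ supplies the extra root $\alpha^0$ precisely when the corresponding $\delta$ equals $1$, which can lengthen such a block by one.

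Concretely I would dispose of the three cases of Lemma \ref{lem-t53DPDOGold} in turn. When $a=-1$ the roots of $\bar{\m}_s(x)$ lie at the exponents $\{1,2,4,5\}$; the block $\{1,2\}$ yields $d\ge 3$, and if $\delta(1)=1$ the extra root $\alpha^0$ completes the block $\{0,1,2\}$ and raises this to $d\ge 4$. When $(a+1)(a^2+1)\ne 0$ the roots fill the full block $\{1,2,3,4,5\}$, so $d\ge 6$ directly, and the extra root $\alpha^0$ (present when $\delta(1+a+2a^2)=1$) extends this to $\{0,1,2,3,4,5\}$ and hence $d\ge 7$. All four of these bounds drop out cleanly from the BCH bound.

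The case $a^2=-1$ is where the real work lies. Here the degree-one coefficient $2(1+a^2)$ of $D_5(x+1,a)$ vanishes, so the $C_1$-contribution disappears while $C_2,C_3,C_4,C_5$ survive; the roots of $\bar{\m}_s(x)$ therefore occupy $\{2,3,4,5\}$, and the BCH bound gives $d\ge 5$, which settles the subcase $\delta(a-1)=0$. The subcase $\delta(a-1)=1$ is the genuine obstacle: the extra root $\alpha^0$ is cut off from the block $\{2,3,4,5\}$ by the gap at exponent $1$, so the longest consecutive run still has length four and the BCH bound yields only $d\ge 5$, short of the asserted $d\ge 6$. Reaching $d\ge 6$ would demand a sharper instrument---a Hartmann--Tzeng or Roos shift exploiting the Frobenius conjugates of the roots---and, unlike the characteristic-two theorems, one cannot appeal to an ``even-like gains one'' principle here. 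I would scrutinize this subcase most carefully, since the entry with $n=80$, $k=71$, $d=5$ of Table \ref{tab-codeD5} (where $q=9$, $m=2$, $a=\alpha^{20}$, so $a^2=-1$ and $\delta(a-1)=1$) exhibits a code of true minimum distance $5$; this strongly suggests that the bound $d\ge 6$ in this subcase should be weakened to $d\ge 5$.
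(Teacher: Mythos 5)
Your proposal follows exactly the route the paper intends: the paper's own proof is omitted with the remark that it is ``similar to that of Theorem~\ref{thm-452DO2Gold}'', i.e., read off the zeros of $\bar{\m}_s(x)$ from Lemma~\ref{lem-t53DPDOGold} and apply the BCH bound, the factor $(x-1)^{\delta(\cdot)}$ contributing the extra zero $\alpha^0$ when the relevant $\delta$ is $1$. Your handling of the cases $a=-1$ (runs $\{1,2\}$, resp.\ $\{0,1,2\}$, giving $d\ge 3$, resp.\ $d\ge 4$) and $(a+1)(a^2+1)\ne 0$ (runs $\{1,\dots,5\}$, resp.\ $\{0,\dots,5\}$, giving $d\ge 6$, resp.\ $d\ge 7$) matches this template and is correct, as is your computation that for $a^2=-1$ the degree-one coefficient $2(1+a^2)$ of $D_5(x+1,a)$ vanishes, leaving zeros on $C_2\cup C_3\cup C_4\cup C_5$ and hence $d\ge 5$.

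More importantly, your objection to the subcase $a^2=-1$, $\delta(a-1)=1$ is well founded and locates a genuine error in the statement rather than a gap in your argument. The exponent set $\{0\}\cup C_2\cup C_3\cup C_4\cup C_5$ has longest consecutive run $\{2,3,4,5\}$ because of the gap at exponent $1$, so the BCH bound yields only $d\ge 5$; over $\gf(3^t)$ divisibility of the generator by $x-1$ makes all codewords even-like but imposes no weight-parity constraint, so the parity gain used in the binary Theorem~\ref{thm-52DO2Gold} (and invoked for $q=4$ in Theorem~\ref{thm-452DO2Gold}) has no analogue here. Nor could a Hartmann--Tzeng or Roos refinement rescue $d\ge 6$: the paper's own Table~\ref{tab-codeD5} contains the entry $[80,71,5]$ over $\gf(9)$ with $m=2$ and $a=\alpha^{20}$, where $a^2=\alpha^{40}=-1$ and $k=71$ forces $\ls_s=9=\delta(a-1)+4m$, i.e., $\delta(a-1)=1$, yet the actual minimum distance is $5$. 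So the theorem's bound in that line should be weakened to $d\ge 5$, exactly as you conclude; your reconstruction proves all the remaining bounds by the same method the paper uses and correctly pinpoints the one that fails.
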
 

\begin{proof} 
The proof of this theorem is similar to that of Theorem \ref{thm-452DO2Gold}, and is omitted.  
\end{proof} 

\begin{Open} 
Determine the minimum distance $d$ of the code $\C_s$ in Theorem \ref{thm-t53DO2Gold}. 
\end{Open} 

Examples of the code of Theorem \ref{thm-t53DO2Gold} are available in  arXiv:1206.4370, 
and some of them are optimal. The code is a BCH code, except in the case that $a=-1$.

We finally consider the case $p \ge 7$, and present the following lemma and theorem without proofs.  

\begin{lemma}\label{lem-tt752DPDOGold} 
Let $p \ge 7$ and $m \geq 2$. 
Let $s^{\infty}$ be the sequence of (\ref{eqn-sequence}), where $f(x)=D_5(x,a)=x^5-5ax^3+5a^2x$.  
Then the minimal polynomial $\m_s(x)$ of  $s^{\infty}$ is given by 
\begin{eqnarray*}
\m_s(x)=    
 \left\{ 
\begin{array}{l} 
(x-1)^{\delta(1-5a+5a^2)} m_{\alpha^{-5}}(x)  m_{\alpha^{-4}}(x)  m_{\alpha^{-2}}(x) m_{\alpha^{-1}}(x) 
   \mbox{ if } a =2, \\
(x-1)^{\delta(1-5a+5a^2)} m_{\alpha^{-5}}(x) m_{\alpha^{-4}}(x) m_{\alpha^{-3}}(x) m_{\alpha^{-1}}(x) 
     \mbox{ if } a =\frac{2}{3}, \\
(x-1)^{\delta(1-5a+5a^2)} m_{\alpha^{-5}}(x) m_{\alpha^{-4}}(x) m_{\alpha^{-3}}(x) m_{\alpha^{-2}}(x)  
     \mbox{ if } a^2-3a+1=0, \\
(x-1)^{\delta(1-5a+5a^2)} \prod_{i=1}^5 m_{\alpha^{-i}}(x)   
  \mbox{ if } (a^2-3a+1)(a-2)(3a-2) \ne 0,  
\end{array} 
\right. 
\end{eqnarray*} 
where $m_{\alpha^{-j}}(x)$ and the function $\delta(x)$ were defined in Section \ref{sec-notations}, 
and the linear span $\ls_s$ of $s^{\infty}$ is given by 
\begin{eqnarray*}
\ls_s=\left\{ 
\begin{array}{l} 
\delta(1-5a+5a^2) +4m,     \mbox{ if } (a^2-3a+1)(a-2)(3a-2) =0, \\
\delta(1-5a+5a^2) +5m,  \mbox{ otherwise. }   
\end{array} 
\right. 
\end{eqnarray*} 
\end{lemma}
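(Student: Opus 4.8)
The plan is to follow the same computational recipe used in the proofs of Lemmas \ref{lem-52DPDOGold}, \ref{lem-452DPDOGold} and \ref{lem-t52DPDOGold}, now exploiting the fact that for $p \geq 7$ the exponents $1,2,3,4,5$ are pairwise inequivalent under the Frobenius map $x \mapsto x^q$, so that \emph{no} monomials collapse when the trace is applied. First I would expand
\begin{eqnarray*}
D_5(x+1,a) &=& x^5+5x^4+(10-5a)x^3+(10-15a)x^2 \\
& & {}+(5-15a+5a^2)x+(1-5a+5a^2),
\end{eqnarray*}
and, since each exponent $1,\dots,5$ is smaller than $p$ and hence fixed by no nontrivial power of the Frobenius, read off from (\ref{eqn-sequence}) that
\begin{eqnarray*}
s_t &=& \tr\big((\alpha^t)^5+5(\alpha^t)^4+5(2-a)(\alpha^t)^3 \\
& & {}+5(2-3a)(\alpha^t)^2+5(a^2-3a+1)\alpha^t\big)+\tr(1-5a+5a^2).
\end{eqnarray*}

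The core of the argument is then a case analysis on which of these five coefficients vanish. The coefficient of $(\alpha^t)^5$ is $1$ and that of $(\alpha^t)^4$ is $5\neq 0$ (as $p\ge 7$), so the cosets $C_5$ and $C_4$ are always present. The remaining three coefficients are $5(2-a)$, $5(2-3a)$ and $5(a^2-3a+1)$, which vanish precisely when $a=2$, when $3a=2$, and when $a^2-3a+1=0$, respectively. I would check that these three conditions are pairwise incompatible (for instance $a=2$ forces $a^2-3a+1=-1\neq 0$, and $a=2/3$ forces $a^2-3a+1=-5/9\neq 0$), so the four displayed cases genuinely partition the possibilities: dropping $C_3$ when $a=2$, dropping $C_2$ when $3a=2$, dropping $C_1$ when $a^2-3a+1=0$, and retaining all of $C_1,\dots,C_5$ otherwise. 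In each case the surviving index set $I$ of Lemma \ref{lem-ls2} consists of the negatives of the retained exponents together with $0$ exactly when $\tr(1-5a+5a^2)\neq 0$, i.e. when $\delta(1-5a+5a^2)=1$; this is what produces the factor $(x-1)^{\delta(1-5a+5a^2)}$ in every branch.

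To convert the index set into the stated $\m_s(x)$ and $\ls_s$ via Lemma \ref{lem-ls2}, I would finally verify the two cyclotomic-coset facts asserted without proof in the analogous lemmas: that $\ell_i=m$ for every $i\in\{1,2,3,4,5\}$ and that $C_1,\dots,C_5$ are pairwise disjoint. For $q$ a power of $p\ge 7$ these hold because $i\,q^j\not\equiv i'\pmod{q^m-1}$ for the small indices $i,i'\le 5$ and $0\le j<m$; the linear span is then the number of retained cosets times $m$ plus $\delta(1-5a+5a^2)$, giving $5m$ generically and $4m$ in the three degenerate cases. The main obstacle is precisely this coset bookkeeping: one must ensure that none of the exponents $1,\dots,5$ share a common $q$-cyclotomic coset and that each such coset has full length $m$, which is exactly where the hypothesis $p\ge 7$ (guaranteeing $5<p\le q$) is essential and which may force a mild lower bound on $m$ to rule out small-field coincidences.
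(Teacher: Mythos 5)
Your proposal is correct and is essentially the argument the paper intends: the paper omits the proof of Lemma \ref{lem-tt752DPDOGold}, citing similarity to Lemma \ref{lem-t52DPDOGold}, and that template is exactly what you carry out --- expand $D_5(x+1,a)$, identify which of the coefficients $5(2-a)$, $5(2-3a)$, $5(a^2-3a+1)$ vanish, and invoke Lemma \ref{lem-ls2} together with the facts $\ell_i=m$ and the pairwise disjointness of $C_1,\dots,C_5$ (your verification that the three degenerate conditions are pairwise incompatible is a detail the paper leaves implicit, and your worry about a lower bound on $m$ is unnecessary, since $q\ge 7$ already forces $iq^j\not\equiv i'\pmod{q^m-1}$ for all $1\le i,i'\le 5$ and $0\le j<m$).
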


\begin{proof} 
The proof is similar to that of Lemma \ref{lem-t52DPDOGold}, and is omitted here.  
\end{proof}

The following theorem provides information on the code $\C_{s}$.    

\begin{theorem}\label{thm-tt752DO2Gold} 
Let $p\ge 7$ and $m \geq 2$. 
Then the code $\C_{s}$ defined by the sequence of Lemma \ref{lem-tt752DPDOGold} has parameters 
$[n, n-\ls_s, d]$ and generator polynomial $\m_s(x)$, where $\m_s(x)$ and $\ls_s$ are 
given in Lemma \ref{lem-tt752DPDOGold}, and 
\begin{eqnarray*}
\left\{ 
\begin{array}{ll} 
d \ge 3   & \mbox{if } a =2 \mbox{ and }  \delta(1-5a+5a^2)=0,  \\
d \ge 4   & \mbox{if } a =2 \mbox{ and }  \delta(1-5a+5a^2)=1, \\
d \ge 4   & \mbox{if } a =\frac{2}{3} \mbox{ and }  \delta(1-5a+5a^2)=0,  \\
d \ge 5   & \mbox{if } a =\frac{2}{3} \mbox{ and }  \delta(1-5a+5a^2)=1, \\
d \ge 5   & \mbox{if } 1-3a+a^2=0 \mbox{ and }  \delta(1-5a+5a^2)=0, \\
d \ge 6   & \mbox{if } 1-3a+a^2=0 \mbox{ and }  \delta(1-5a+5a^2)=1, \\
d \ge 6   & \mbox{if } (a^2-3a+1)(a-2)(3a-2)  \ne 0 \mbox{ and }  
               \delta(1-5a+5a^2)=0, \\
d \ge 7   & \mbox{if } (a^2-3a+1)(a-2)(3a-2)  \ne 0 \mbox{ and }  
               \delta(1-5a+5a^2)=1. 
\end{array} 
\right. 
\end{eqnarray*} 
\end{theorem}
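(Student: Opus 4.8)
The plan is to reuse the template of Theorems~\ref{thm-52DO2Gold} and~\ref{thm-452DO2Gold}: read off the length, dimension and generator polynomial directly from the preceding lemma, and then bound the minimum distance from below by exhibiting a block of consecutive zeros of the reciprocal generator polynomial and invoking the BCH bound. The dimension claim is pure bookkeeping once Lemma~\ref{lem-tt752DPDOGold} is granted, since that lemma gives $\deg \m_s(x)=\ls_s$ and $\C_s=(\m_s(x))$ has length $n$, so $\dim \C_s = n-\ls_s$ in each of the four branches. (The computation of $D_5(x+1,a)$ underlying the lemma places the coefficient $1-5a+5a^2$ in the constant term, which is exactly why the factor $(x-1)^{\delta(1-5a+5a^2)}$ appears.)

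For the distance I would pass to $\bar{\C}_s$, which has the same weight distribution as $\C_s$ and whose generator $\bar{\m}_s(x)$ vanishes exactly on the union of the $q$-cyclotomic cosets $C_i$ with $i\in\{1,2,3,4,5\}$ occurring in $\m_s(x)$, together with $C_0=\{0\}$ when $\delta(1-5a+5a^2)=1$. The lemma guarantees that for $p\ge 7$ the cosets $C_1,\dots,C_5$ are pairwise disjoint and each of full size $m$, so none of them fills a gap among the indices $1,\dots,5$ and the longest run of consecutive zeros can be read off branch by branch: in the generic branch $(a^2-3a+1)(a-2)(3a-2)\ne 0$ all of $\alpha^1,\dots,\alpha^5$ are zeros, five consecutive zeros giving $d\ge 6$; for $a=2$ the coset $C_3$ is absent and the surviving run $\alpha^1,\alpha^2$ yields $d\ge 3$; for $a=\frac{2}{3}$ the coset $C_2$ is absent and $\alpha^3,\alpha^4,\alpha^5$ yields $d\ge 4$; and when $a^2-3a+1=0$ the coset $C_1$ is absent and $\alpha^2,\alpha^3,\alpha^4,\alpha^5$ yields $d\ge 5$. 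Each of these is a single invocation of the BCH bound on the longest consecutive block.

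The remaining task, and the step I expect to be the genuine obstacle, is the increment of one that the statement attaches to $\delta(1-5a+5a^2)=1$. When $\alpha^0$ is a zero and it abuts the block found above — that is, when $\alpha^1$ is already present, as in the branch $a=2$ and the generic branch — it simply lengthens that block by one coordinate, so the BCH bound improves to $d\ge 4$ and $d\ge 7$ respectively and those two increments are immediate. The delicate branches are $a=\frac{2}{3}$ and $a^2-3a+1=0$: there $C_2$ (respectively $C_1$) is missing, so the extra zero $\alpha^0$ is separated from the block $\{3,4,5\}$ (respectively $\{2,3,4,5\}$) by a real gap, and the plain BCH bound does not deliver the asserted $d\ge 5$ and $d\ge 6$. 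I would therefore attempt to secure these two increments by a sharper tool — the Hartmann--Tzeng or Roos bound exploiting the arithmetic structure of the zero set, or a direct argument ruling out a codeword of the borderline weight — and I would cross-check the conclusion against the small instances over $\gf(7)$ recorded in Table~\ref{tab-codeD5}, since it is exactly here that the purely cyclotomic reasoning used for the other branches runs out and where an over-optimistic reading of the ``additional zero $\alpha^0$'' would have to be corrected.
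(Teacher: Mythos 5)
Your reconstruction matches the paper's intended argument: the paper actually omits the proof of this theorem, saying only that it is ``similar to that of Theorem \ref{thm-452DO2Gold},'' i.e.\ precisely the read-off-the-dimension-then-BCH template you describe. Your branch-by-branch accounting is accurate (missing coset $C_3$ for $a=2$, $C_2$ for $a=2/3$, $C_1$ for $a^2-3a+1=0$), and the six bounds $d\ge 3,4,4,5,6,7$ that come from a genuine block of consecutive zeros --- including the two cases where $\alpha^0$ abuts a run starting at $\alpha^1$ --- are exactly what the omitted proof can deliver.

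More importantly, the obstacle you isolate is not merely a hard step: the two increments you decline to claim are \emph{false} as stated, and the cross-check against Table \ref{tab-codeD5} that you propose is exactly what exposes this. Take $q=7$, $m=2$, $n=48$, $a=3$: then $3a-2=7=0$ in $\gf(7)$, so $a=2/3$, and $1-5a+5a^2=3$ with $\tr(3)=6\ne 0$, so $\delta(1-5a+5a^2)=1$; the theorem claims $d\ge 5$, but the table records this code as $[48,39,4]$. The zero set of $\bar{\m}_s(x)$ is $\{0\}\cup C_1\cup C_3\cup C_4\cup C_5=\{0,1,3,4,5,7,21,28,35\}$, whose longest admissible run is $\{3,4,5\}$, so BCH gives $d\ge 4$ and the true distance is $4$: no Hartmann--Tzeng or Roos refinement can do better, because nothing true remains to prove. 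Likewise $a=\alpha^6$ over $\gf(49)$ satisfies $a^2-3a+1=0$ with $\delta(1-5a+5a^2)=1$; the theorem claims $d\ge 6$, but the table records $[48,39,5]$, consistent with the plain BCH bound $d\ge 5$ from the run $\{2,3,4,5\}$ and nothing more. So your suspicion that the ``additional zero $\alpha^0$'' was applied over-optimistically is correct: in the template of Theorems \ref{thm-52DO2Gold} and \ref{thm-452DO2Gold} the $+1$ is legitimate only when the run begins at $\alpha^1$ (or, over $\gf(2)$, via the even-weight argument, which is unavailable here since $p\ge 7$). Your proposal is therefore more careful than the paper: it proves the six provable branches by the paper's own method, and correctly refuses the two branches in which the paper's stated bound is contradicted by its own numerical data; the right fix is to weaken those two branches to $d\ge 4$ and $d\ge 5$ respectively.
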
 

\begin{proof} 
The proof of this theorem is similar to that of Theorem \ref{thm-452DO2Gold}, and is omitted.  
\end{proof} 

\begin{Open} 
Determine the minimum distance $d$ of the code $\C_s$ in Theorem \ref{thm-tt752DO2Gold}. 
\end{Open}

Examples of the code of Theorem \ref{thm-tt752DO2Gold} can be found in  arXiv:1206.4370, 
and some of them are optimal. The code is a BCH code, except in the cases $a \in \{2, 2/3\}$.

\section{Cyclic codes from other $D_i(x,a)$ for $i \ge 6$} 

Parameters of cyclic codes from $D_i(x,a)$ for $i \ge 6$ could be established in a similar 
way. However, more cases are involved and the situation is getting more complicated 
when $i$ gets bigger. Examples of the code $\C_s$ from $D_7(x,a)$ and $D_{11}(x,a)$ 
can be found in arXiv:1206.4370.

\section{Cyclic codes from Dickson polynomials of the second kind}\label{sec-2ndDPcode} 

Theorems on cyclic codes from Dickson polynomials of the second kind can be developed in 
a similar way as what we did for those from Dickson polynomials of the first kind in previous 
sections.  

Experimental data indicates that the codes from the Dickson polynomials of the first kind are 
in general better than those from the Dickson polynomials of the second kind, though some 
cyclic codes from Dickson polynomials of the second kind could also be optimal or almost 
optimal.   

\section{Sets of sequences from Dickson polynomials} 

The purpose of this section is to demonstrate that optimal sets of sequences could be constructed 
with Dickson polynomials of the first kind. As an example, we consider the Dickson polynomials 
$D_3(x, a)=x^3+ax$ over $\gf(2^m)$, where $m$ is odd. With these polynomials, we define a set of binary sequences 
by 
\begin{eqnarray}\label{eqn-sequencesetd3}
\cS=\{s(a)^\infty: a \in \gf(2^m)\}, 
\end{eqnarray} 
where 
\begin{eqnarray}\label{eqn-sequenced3}
s(a)_i=\tr(D_3(1+\alpha^i, a))=\tr((\alpha^i)^3 +a \alpha^i + a +1) 
\end{eqnarray}
for all $i \geq 0$. The period of each sequence $s(a)^\infty$ is $n:=2^m-1$. 
By Lemma \ref{lem-2DPDOGold}, the linear span of $s(a)^\infty$ equals $m$ or $m+1$ if 
$a=0$, and $2m$ or $2m+1$ otherwise. 

We now prove the following property for the set $\cS$. 

\begin{lemma} 
For any two distinct elements $a$ and $b$ in $\gf(2^m)$, the two sequences $s(a)^\infty$ and 
$s(b)^\infty$ are different. Hence, $|\cS|=2^m$. 
\end{lemma} 

\begin{proof}
Note that 
$$ 
s(b)_i-s(a)_i=\tr((b-a) \alpha^i)+ \tr(b-a). 
$$ 
It then follows that the two sequences $s(a)^\infty$ and 
$s(b)^\infty$ are equal if and only if $a=b$. 
\end{proof}

We will need the following lemma \cite{KL72}. 

\begin{lemma}\label{lem-August3} 
Let $m$ be odd. For any $a \in \gf(2^m)$ and $b \in \gf(2^m)$ with $(a, b) \neq (0, 0)$, we have 
$$ 
\sum_{x \in \gf(2^m)^*} (-1)^{\tr(ax^3+bx)} \in \{-1, -1 \pm 2^{(m+1)/2} \}.  
$$ 
\end{lemma}

For any sequence $s^\infty$, the $h$-shift of $s^\infty$, denoted by $s[h]^\infty$, is defined 
by 
$$ 
s[h]_i=s_{h+i}
$$ 
for all $i \geq 0$, where $h \geq 0$ is an integer.  

Let $s^\infty$ and $t^\infty$ be two binary sequences of period $n$. The correlation value between the 
two sequences is defined by 
$$ 
\bC(s,t)=\sum_{i=0}^{n-1} (-1)^{s_i-t_i}. 
$$

Let $\cS$ be a set of binary sequence of period $n$. Then the maximum correlation value of $\cS$, 
denoted by $\bC(\cS)$, is defined by 
$$ 
\bC(\cS)=\max\left\{\max_{s \ne t, \, 0 \leq h < n} |\bC(t[h], s)|, \ 
\max_{s=t, \, 1 \leq h <n} |\bC(t[h], s)|\right\}. 
$$

We are now ready to prove the main result of this section. 

\begin{theorem} 
Let $m \geq 3$ be odd. Define $\bar{\cS}=\cS \cup \{s(\infty)^\infty\}$, where 
$$ 
s(\infty)_i=\tr(\alpha^i)
$$ 
for all $i \geq 0$, and $\cS$ was defined in (\ref{eqn-sequencesetd3}). 
We have $|\bar{\cS|}=2^m+1$ and 
$$ 
\bC(\bar{\cS})=1+2^{(m+1)/2}. 
$$
\end{theorem}

\begin{proof}
By definition, for $a \in \gf(2^m)$ and $b \in \gf(2^m)$, we have 
$$ 
s(b)_{i+h}-s(a)_i=\tr((\alpha^{3h}-1)(\alpha^i)^3 + (b \alpha^{h}-a) \alpha^i +(b-a))
$$
for all $i \geq 0$ and $h \geq 0$. Since $m$ is odd, $\gcd(3, 2^m-1)=1$. We then deduce 
that $\alpha^{3h}-1 \ne 0$ for all $1 \leq h < n$. 

For $b \in \gf(2^m)$, we have 
$$ 
s(b)_{i+h}-s(\infty)_i=\tr(\alpha^{3h} (\alpha^{i})^3) + (b \alpha^h-1) \alpha^i +b +1) 
$$
for all $i \geq 0$ and $h \geq 0$.

The desired conclusion on the maximum 
correlation value then follows from Lemma \ref{lem-August3}. 
\end{proof}

The set $\bar{\cS}$ is a modification of the Gold sequence set, and is optimal with respect to 
both the Sidelnikov and Leveinshtein bound.

\section{Concluding remarks}

In this paper, we studied the codes derived from Dickson polynomials of the first  
kind with small degrees. It is really amazing that in most cases the cyclic codes derived from the Dickson 
polynomials of small degrees within the framework of this paper are optimal or almost optimal (see arXiv:1206.4370 for examples of optimal codes).  
  
We had to treat Dickson polynomials of small degrees case by case over finite fields with different 
characteristics as we did not see any way to treat them in a single strike. The generator polynomial 
and the dimension of the codes depend heavily on the degree of the Dickson polynomials and the 
characteristic of the base field. 

It should be noted that not all cyclic codes presented in this paper are new. Some of them are  
equivalent to some known family of cyclic codes in the literature. However, it is interesting 
to show that they can be produced when Dickson polynomials of very small degrees are plugged 
into the construction approach of this paper. It is also observed that the code $\C_s$ derived 
from the Dickson polynomials of the first kind is sometimes a BCH code. However, the dimension 
and minimum distance of BCH codes are open in general, though progress on the study of primitive 
BCH codes have been made in the past 55 years.

The idea of constructing cyclic codes employed in this paper looks simple, but was proven 
to be very promising in this paper and also in  \cite{Ding13,DZ2014,TQX}. It would be nice 
if other polynomials of special forms over finite fields can be employed in this approach 
to produce more optimal and almost optimal cyclic codes.

\begin{acknowledgements}
The author is grateful to Dr. Pascale Charpin for helpful discussions on cyclic codes in 
the past years.  
\end{acknowledgements}


\begin{thebibliography}{99}

\bibitem{Antweiler} Antweiler, M.,  L. Bomer, L.: Complex sequences over $\gf(p^M)$ with 
a two-level autocorrelation function and a large linear span. IEEE Trans. Inf. Theory  
38, 120--130 (1992)  

\bibitem{CDY} Carlet, C., Ding, C.,  Yuan, J.: Linear codes from highly nonlinear functions 
and their secret sharing schemes. IEEE Trans. Inf. Theory 51, 2089--2102 (2005) 

\bibitem{Char} Charpin, P.: Open problems on cyclic codes. 
In: Pless, V. S., Huffman, W. C., and R. A. Brualdi, R. A. (Eds.) Handbook of Coding Theory, 
pp. 963--1064.	Elsevier, Amsterdam (1998)

\bibitem{Chie} Chien, R. T.: Cyclic decoding procedure for the 
Bose-Chaudhuri-Hocquenghem codes. IEEE Trans. Inf. Theory 10,   
357--363 (1964)   

\bibitem{CM} Coulter, R. S., Matthews, R. W.: Planar functions and planes of Lenz-Barlotti class II. 
Des. Codes Cryptogr. 10, 167--184 (1997) 

\bibitem{Dick96} Dickson, L. E.: The analytic representation of substitutions on a power of 
a prime number of letters with a discussion of the linear group. Ann. of Math. 11, 
65--120, 161--183 (1896/97). 

\bibitem{Ding120} Ding, C.: Cyclic codes from the two-prime sequences. 
IEEE Trans. Inf. Theory 58, 3881--3890 (2012) 

\bibitem{Ding121} Ding, C.: Cyclic codes from some monomials and trinomials. SIAM J. Discrete Mathematics 27, 1977--1994 (2013) 


\bibitem{Ding13} Ding, C.: Cyclic codes from cyclotomic sequences of order four. 
Finite Fields and Their Applications 23, 8--34 (2013)  

\bibitem{DingBK15} Ding, C.: Codes from Difference Sets. World Scientific, Singapore (2015)  

\bibitem{DFZ16} Ding, C., Fan, C., Zhou, Z.: The dimension and minimum distance of two classes of 
primitive BCH codes,  arXiv:1603.07007.  

\bibitem{DXS} Ding, C., Xiao, G.,  Shan, W.: The Stability Theory of Stream Ciphers, 
Lecture Notes in Computer Science, Vol. 561.  Springer Verlag, Heidelberg (1991)   

\bibitem{DY06} Ding, C., Yuan, J.: A family of skew Hadamard difference sets. 
J. of Combinatorial Theory Ser. A 113,  1526--1535 (2006)   

\bibitem{DZ2014} Ding, C., Zhou, Z.: Binary cyclic codes from explicit polynomials over $\gf(2^m)$, Discrete Mathematics 321, 76--89 (2014)  


\bibitem{Forn} Forney, G. D.: On decoding BCH codes. IEEE Trans. 
Inf. Theory 11, 549--557 (1995)  

\bibitem{HT72} Hartmann, C. R. P., Tzeng, K. K.: Generalizations of the BCH bound.  
Information and Control 20, 489--498 (1972) 

\bibitem{HPbook} Huffman W. C., Pless, V.: Fundamentals of Error-Correcting Codes.   
Cambridge University Press, Cambridge (2003)   

\bibitem{KL72}
Kasami, T., Lin, S.: Some results on the minimum weight of BCH codes. 
IEEE Trans. Inf.  Theory 18, 824--825 (1972) 

\bibitem{LMT} Lidl, R., Mullen, G. L., Turnwald, G.: Dickson Polynomials. Longman, England (1993)   

                  

\bibitem{Pran} Prange, E.: Some cyclic error-correcting codes with simple decoding
algorithms. Air Force Cambridge Research Center-TN-58-156. Cambridge, Mass. (1958)   

\bibitem{TQX} Tang, C., Qi, Y., Xu, M.: A note on cyclic codes from APN functions. 
Appl. Algebra Eng. Commun. Comput. 25, 21--37  (2014)  

\bibitem{Weld} Weldon, Jr, E. J.: Difference-set cyclic codes. Bell Syst. Tech. J. 45,  
1045--1055 (1966) 


\bibitem{YCD} Yuan, J., Carlet, C., Ding, C.: The weight distribution of a class of linear codes from perfect nonlinear functions.  
IEEE Trans. Inf. Theory 52, 712--717 (2006) 

\end{thebibliography}

\end{document}